\title{Complexity of Deliberative Coalition Formation}
\author{
    Edith Elkind, Abheek Ghosh\thanks{Supported by Clarendon Fund and SKP Scholarship.}, Paul Goldberg
}
\newtheorem{theorem}{Theorem}
\newtheorem{proposition}[theorem]{Proposition}
\newtheorem{lemma}[theorem]{Lemma}
\newtheorem{corollary}[theorem]{Corollary}
\theoremstyle{definition}
\newtheorem{definition}{Definition}[section]
\DeclareMathOperator{\bb}{\mathbf{b}}
\DeclareMathOperator{\bd}{\mathbf{d}}
\DeclareMathOperator{\bD}{\mathbf{D}}
\DeclareMathOperator{\bR}{\mathbb{R}}
\DeclareMathOperator{\cS}{\mathcal{S}}
\DeclareMathOperator{\cX}{\mathcal{X}}
\DeclareMathOperator{\cV}{\mathcal{V}}
\DeclarePairedDelimiter\ceil{\lceil}{\rceil}
\DeclarePairedDelimiter\floor{\lfloor}{\rfloor}
\begin{document}

\maketitle

\begin{abstract}
\citet{elkind2021aaai,elkind2021arxiv} introduced a model for deliberative coalition formation, where a community wishes to identify a strongly supported proposal from a space of alternatives, in order to change the status quo. 
In their model, agents and proposals are points in a metric space, agents' preferences are determined by distances, and agents deliberate by dynamically forming coalitions around proposals that they prefer over the status quo. The deliberation process operates via \textit{$k$-compromise} transitions, where agents from $k$ (current) coalitions come together to form a larger coalition in order to support a (perhaps new) proposal, possibly leaving behind some of the dissenting agents from their old coalitions. A deliberation succeeds if it terminates by identifying a proposal with the largest possible support. For deliberation in $d$ dimensions, Elkind et al.~consider two variants of their model: in the {\em Euclidean model}, proposals and agent locations are points in ${\mathbb R}^d$
and the distance is measured according to $||\cdot||_2$; and in the {\em hypercube model}, proposals and agent locations are vertices 
of the $d$-dimensional hypercube and the metric is the Hamming distance.
They show that in the continuous model $2$-compromises are guaranteed to succeed, but in the discrete model for deliberation to succeed it may be necessary to use $k$-compromises with $k\ge d$.
We complement their analysis by 
(1) proving that in both models it is hard to find a proposal 
with a high degree of support, and even a $2$-compromise transition 
may be hard to compute;
(2) showing that a sequence of $2$-compromise transitions 
may be exponentially long;
(3) strengthening the lower bound on the size of the compromise  
for the $d$-hypercube model from $d$ to $2^{\Omega(d)}$.
\end{abstract}

\noindent 

\section{Introduction}\label{sec:intro}
Imagine a nominally democratic country where the current ruling party has been in power for many years, through a combination of clever political strategizing and a range of more or less non-democratic means (e.g., gerrymandering, vote suppression, media control, etc.). Even if, initially, political parties and alliances other than the ruling party may differ in positions they take on various issues facing the society, after many years of being out of power, they may choose to focus on what unites them rather than on what differentiates them, and seek to identify a common platform that would enjoy broad popular support and enable them to overturn the status quo. In doing so, they may pursue a variety of goals: to increase their chances of winning seats in an election, to place themselves in a better position in post-electoral bargaining, or---in the most extreme cases---to initiate a revolt against the current regime. This common platform may be quite different from the party's true position---what matters is that it attracts a large number of supporters and that it offers an improvement over the status quo, i.e., is preferable to the policies of the current governing party.

In a recent paper, \citet{elkind2021aaai,elkind2021arxiv} propose
a simple model that aims to capture the essential features of such scenarios\footnote{The conference version of their paper~\cite{elkind2021aaai} was published in AAAI'21, and an extended version was presented at COMSOC'21 and is available from arXiv~\cite{elkind2021arxiv}. One of the models we consider, namely, the hypercube model, is only described in the arXiv version, so in what follows we refer to the arXiv version only.}. 
In this model, both voters and proposals are associated with points in a metric space, with voters' preference being driven by distances: voters prefer proposals that are close to them to ones that are further away. The number of voters is finite, but the set of feasible proposals may be any (potentially infinite) subset of the metric space.
There is also a distinguished point, referred to as the {\em status quo} and denoted by $r$. A voter $v$ approves a proposal $p$ if her distance to $p$ is strictly less than her distance to $r$. Voters deliberate in order to identify a proposal that is supported by as many voters as possible. At each point, each voter 
selects some approved proposal to support, with voters who support a given proposal forming a {\em deliberative coalition} around it. This collection of deliberative coalitions---a deliberative coalitions structure---evolves based on {\em transition rules}: for instance, one transition rule allows two coalitions to identify a new proposal supported by all members of both coalitions and to form a new joint coalition around it. The transition rules aim to capture the behavior of agents who are consensus-driven---i.e., they desire to form a large coalition to overturn the status quo---and myopic, in the sense that they make a decision whether to participate in a transition based on the outcome of that transition only and not the entire deliberative process.

In their work, \citet{elkind2021arxiv} primarily focus on the power of various transition rules to enable the identification of popular proposals, in a range of metric spaces. They do not investigate the complexity of the associated algorithmic challenges, and only offer very crude bounds on the number of steps it may take the deliberation process to converge. 

\subsection{Our Contribution}
Our goal is to complement the analysis of \citet{elkind2021arxiv},
by exploring the complexity of the deliberation process in their model. We focus
on two deliberation spaces: in the {\em hypercube} model, voters and proposals are 
vertices of the $d$-dimensional hypercube, with distances measured according to the Hamming distance, and in the {\em Euclidean} model voters and proposals are
elements of the $d$-dimensional Euclidean space, with $||\cdot||_2$ being the underlying distance measure. In both models, each point of the underlying metric space is considered to be a feasible proposal.
We consider three types of questions:
\begin{itemize}
    \item What is the computational complexity of identifying a proposal approved by as many voters as possible, both from a centralized perspective (i.e., how can we compute a popular proposal given the positions of all voters), and from a decentralized perspective (i.e., how can a group of voters identify the next step in the deliberative process)? We consider both the worst-case complexity of this problem and its parameterized complexity, with two natural parameters being the number of voters and the dimension of the space.
    \item How many transitions may be necessary for a deliberation to converge? \citet{elkind2021arxiv} show that deliberation always converges after at most $n^n$ steps, where $n$ is the number of voters; we improve this upper bound to $2^n$ and derive exponential lower bounds for both of the models we consider.
    \item How many coalitions need to be involved in each deliberation step to ensure that a most approved proposal is identified? \citet{elkind2021arxiv} prove that in Euclidean deliberation spaces $2$-coalition deals are sufficient irrespective of dimension, and in a $d$-dimensional hypercube we may need transitions involving at least $d$ coalitions; we improve this lower bound from $d$ to $2^{\Theta(d)}$. 
\end{itemize}


The work of \citet{elkind2021arxiv} is an important step towards modeling coalition formation in the presence of a status quo option. Such a theory provides formal foundations for the design and development of practical systems that can support successful deliberation, for instance, by helping agents to identify mutually beneficial compromise positions (cf. \cite{di2016system}). Our paper supplements the analysis of
\citet{elkind2021arxiv} by resolving several challenging open questions posed by their work.




\subsection{Related Work}
Our work builds directly on the work of \citet{elkind2021arxiv}. In turn, their paper belongs to a rich tradition in political science that studies {\em spatial coalition formation}~\cite{coombs1964theory,enelow1984spatial,merrill1999unified,vries1999governing}. An important feature of their model that sets it somewhat apart from prior work is the presence of a special reference point, i.e., the status quo. The agenda of social choice in the presence of status quo has recently been pursued by Shapiro, Talmon, and co-authors in a series of papers~\cite{shapiro2018incorporating,shahafST19,bulteau2021aggregation,abramowitz2020amend}.

The study of group deliberation is a broad and interdisciplinary area, see, e.g.,
\cite{austen2005deliberation,hafer2007deliberation,patty2008arguments,list2011group,list2013deliberation,rad2021deliberation,perote2015model,karanikolas2019voting}; we refer the reader to the work
of \citet{elkind2021arxiv} for further discussion. In particular, an important consideration is whether simple deliberation protocols that only involve a small number of participants can achieve a desirable outcome~\cite{goel2016towards,fain2017sequential};
this question is similar to the one considered in Section~\ref{sec:beyond} of our paper.

The process of deliberation that we consider can be viewed as one of dynamic
coalition formation~\cite{konishi2003coalition,arnold2002dynamic,chalkiadakis2012sequentially}. In particular, the transitions from one deliberative coalition structure to the next one can be viewed as a form of better response dynamics, where agents prefer deliberative coalitions whose proposal they approve to those whose proposal they do not approve,
and, among coalitions whose proposal they approve, they favor larger coalitions.
Potential functions are commonly used in the literature to establish upper bounds on convergence time of better response dynamics (see, e.g., \cite{tardosW07}); however, our
use of a potential function to establish a lower bound (Section~\ref{sec:length}) is somewhat unconventional.

\section{Preliminaries}\label{sec:prelim}
Following \citet{elkind2021arxiv}, we define a {\em deliberation space}
as a $4$-tuple $(\cX,\cV,r,\rho)$, where 
$\cX$ is a (possibly infinite) set of {\em proposals}, 
$\cV = \{v_1, \ldots, v_n\}$ is a set of of $n$ {\em agents}, 
$r\in\cX$ is a special element of $\cX$, which we will refer to as 
the {\em status quo}, and $\rho$ is a metric on $\cX\cup\cV$.
An agent $v_i$ is said to {\em approve} a proposal
$x\in\cX\setminus\{r\}$ if $\rho(v_i, x)< \rho(v_i, r)$: intuitively, 
$v_i$ approves $x$ if $x$ is more representative of her position than $r$ is.
It will be convenient to require that each agent approves at least one proposal, 
i.e., to exclude agents that are happy with the status quo.

We will consider two families of $d$-dimensional deliberation spaces,
where $d$ is a positive integer:
Euclidean deliberation spaces and hypercube deliberation spaces.
We refer to \citet{elkind2021arxiv} 
for a discussion of some other deliberation spaces.
\begin{itemize}
    \item $d$-\textbf{Euclidean} The space of proposals $\cX$ is  $\bR^d$,
    $r=(0, 0, \dots, 0)$, and $v_i\in \bR^d$ for each $i\in [n]$.
    The metric $\rho$ is the usual Euclidean norm: $\rho(x,y) = ||x-y||_2$.
    \item $d$-\textbf{Hypercube} The space of proposals $\cX$ is  $\{0,1\}^d$, 
    $r=(0, 0, \dots, 0)$, and $v_i\in\{0, 1\}^d$ for each $i\in [n]$.
    The metric $\rho$ is the Hamming distance: $\rho(x,y) = ||x-y||_1$.
\end{itemize}

At any point in the deliberation process, the set of agents is split
into deliberative coalitions. A {\em deliberative coalition} is 
a pair $\bd=(C, x)$, where $C$ is a non-empty subset of $\cV$, $x\in\cX\setminus\{r\}$, 
and each agent in $C$ approves $x$. 
When convenient, we identify a deliberative coalition 
$\bd = (C,x)$ with its set of agents $C$, and say that agents in $C$
{\em support} $x$. 

A {\em deliberative coalition structure}
is a set $\bD = \{\bd_1, \ldots, \bd_m\}$, $m \ge 1$, such that:
\begin{itemize}
    \item $\bd_j = (C_j, x_j)$ is a deliberative coalition for each $j \in [m]$;
    \item $\cup_{j \in [m]} C_j = \cV$ 
    and $C_j \cap C_\ell = \varnothing$ for all $j, \ell \in [m]$ such that $j \neq \ell$.
\end{itemize}

The agents start out in some deliberative coalition structure, and then
this structure evolves according to {\em transition rules}. 
\citet{elkind2021arxiv}~define several such rules; in this paper, we focus
on {\em $k$-compromise transitions}. The intuition behind all these
rules is that agents seek to form large coalitions and act myopically, 
i.e., an agent in a deliberative coalition $(C, x)$ 
is willing to deviate so as to end up in a coalition 
$(C', x')$ if (i) she approves $x'$ and (ii) $|C'| > |C|$.
Specifically, under a $k$-compromise transition, 
agents from $\ell\le k$ existing deliberative coalitions 
$\bd_1, \dots, \bd_\ell$, where $\bd_j=(C_j, x_j)$ for $j\in [\ell]$, 
get together and identify a proposal $x$ such that $t$ of them approve
$x$ and $t>|C_j|$ for each $j\in [\ell]$. Then all agents
who approve $x$ form a deliberative coalition that supports proposal $x$.
The agents in $C_j$ who do not approve $x$ stay put, i.e., 
they end up in deliberative coalition $(C'_j, x_j)$ with $C'_j =  C_j\setminus C$
(note that $C'_j$ may be empty).
This notion is formalized as follows.

\begin{definition}[$k$-Compromise Transitions]
A pair of coalition structures $(\bD,\bD')$ forms a {\em $k$-compromise transition} if there exist $\ell$ coalitions $\bd_1, \ldots, \bd_\ell \in \bD$, $1\le \ell\le k$, 
where $\bd_j=(C_j, x_j)$ for $j\in[\ell]$, such that $\bD'$ 
is obtained from $\bD$ by (1) removing $\bd_1, \dots, \bd_\ell$, 
(2) adding a deliberative coalition $(C, x)$ such that
$C\subseteq \cup_{j\in [\ell]}C_j$ and for each $j\in [\ell]$ 
it holds that $|C|>|C_j|$ and $C\cap C_j$ consists 
of all agents in $C_j$ that approve $x$;
(3) for each $j\in [\ell]$ such that $C_j\not\subseteq C$ adding 
a deliberative coalition $(C_j\setminus C, x_j)$. 
\end{definition}

We say that a deliberative coalition structure $\bD$ is {\em $k$-terminal} if
there does not exist a $k$-compromise transition of the form $(\bD, \bD')$.
A {\em $k$-deliberation} is a sequence of $k$-compromise transitions such that
for the last transition $(\bD, \bD')$ in this sequence it holds that $\bD'$
is terminal. We will refer to $\bD'$ as the {\em outcome} of the respective $k$-deliberation.

We define the {\em score} of a proposal $x\in\cX\setminus\{r\}$ 
as the number of agents in $\cV$ who approve $x$. We say that 
a proposal $x\in \cX\setminus\{r\}$ is {\em popular} if its score is at least as high
as that of any other proposal in $\cX\setminus\{r\}$.
A deliberative coalition structure $\bD$ is {\em successful}
if it contains a deliberative coalition $(C, x)$ such that $x$
is a popular proposal and $C$ consists of all agents who approve $x$.
A $k$-deliberation is {\em successful} if its outcome is successful.

Note that if there is a majority-approved proposal, a successful 
$k$-deliberation identifies some such proposal, enabling 
a majority-supported change to the status quo.


An important result of \citet{elkind2021arxiv} is that a deliberation process with $k$-compromise transitions always terminates.
\begin{theorem}\cite{elkind2021arxiv}\label{thm:compromisenn}
For each integer $k$
with $2\le k\le n$ a $k$-deliberation can have at most $n^n$ transitions.
\end{theorem}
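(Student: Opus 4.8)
The plan is to exhibit a potential function on deliberative coalition structures that strictly increases with every $k$-compromise transition, and then to bound the number of distinct values it can take; since each transition strictly increases the potential, the number of transitions is at most the number of attainable values, which I will show is at most $n^n$.

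For a coalition structure $\bD$, I would assign to each agent $v_i$ the size $s_i$ of the unique deliberative coalition of $\bD$ containing it, so that $s_i\in\{1,\dots,n\}$. Collecting these $n$ numbers and sorting them in non-increasing order yields a vector $\Phi(\bD)\in\{1,\dots,n\}^n$, and I would compare such vectors lexicographically, reading from the largest entry first. The heart of the argument is to show that every $k$-compromise transition $(\bD,\bD')$ satisfies $\Phi(\bD')>\Phi(\bD)$.

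To prove this strict increase, consider a transition that takes coalitions $\bd_1,\dots,\bd_\ell$ with $\bd_j=(C_j,x_j)$ and forms a new coalition $(C,x)$ of size $t:=|C|$, where $t>|C_j|$ for every $j\in[\ell]$. The coalitions that disappear are exactly $C_1,\dots,C_\ell$, each of size strictly less than $t$, while the coalitions that newly appear are $C$ (of size $t$) together with the non-empty leftovers $C_j\setminus C$, each of size at most $|C_j|<t$. Hence for every threshold $s>t$ the collection of coalitions of size at least $s$ is identical in $\bD$ and $\bD'$, so the number of agents $v_i$ with $s_i\ge s$ is unchanged; at the threshold $s=t$, however, the $t$ agents of $C$ previously lay in coalitions of size $<t$ and now lie in a coalition of size $t$, so the number of agents with $s_i\ge t$ strictly increases. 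Reading the sorted vectors from their largest entries downward, the entries exceeding $t$ therefore coincide, and the first position beyond this common prefix carries the value $t$ in $\Phi(\bD')$ but a strictly smaller value in $\Phi(\bD)$; this is precisely the lexicographic inequality $\Phi(\bD')>\Phi(\bD)$.

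Finally, since $\Phi$ takes values in $\{1,\dots,n\}^n$, it has at most $n^n$ distinct values, and the strict monotonicity established above bounds the length of any $k$-deliberation by $n^n$, as required (the bound holds uniformly for all $k$ with $2\le k\le n$, since the argument never uses $k$ beyond the definition of a single transition). I expect the only delicate point to be the bookkeeping in the monotonicity step, namely the verification that every coalition removed or created by the transition other than $C$ itself has size strictly below $t$, so that the agent counts above each threshold $s>t$ are preserved exactly while the count at $s=t$ jumps; once this threshold-counting claim is pinned down, translating it into the lexicographic comparison of the sorted size vectors is routine.
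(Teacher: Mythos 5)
Your argument is correct and is essentially the proof the paper attributes to \citet{elkind2021arxiv}: the theorem is cited here without proof, and the paper itself describes the original argument (Section~\ref{sec:length}) precisely as a lexicographic potential function on coalition sizes, which is exactly the sorted per-agent size vector you construct and show to increase under every $k$-compromise. One cosmetic slip: if $\bD$ already contains an uninvolved coalition of size exactly $t$, then the entry immediately after the block of values exceeding $t$ equals $t$ in \emph{both} sorted vectors, so the first position of disagreement is not the one you name but rather position $N_t(\bD)+1$, where $\Phi(\bD')$ still reads $t$ while $\Phi(\bD)$ drops below $t$; your threshold counts ($N_s$ preserved for $s>t$, $N_t$ strictly increased) already deliver this, so the conclusion is unaffected.
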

This result holds for any deliberation space, 
so in particular both for the $d$-Euclidean space and the $d$-hypercube
for any $d\ge 1$.

\section{Complexity of Finding Popular Proposals}\label{sec:complex}
In this section, we focus on the complexity-theoretic challenges
presented by the deliberation process. We first consider 
two computational problems: {\sc Score} and {\sc Perfect Score}.
For {\sc Score}, the input is a deliberation space and a positive integer $\eta$,
and we ask if there is a proposal that is approved
by at least $\eta$ agents in $\cV$; in {\sc Perfect Score},
we are given a deliberation space and 
ask if there is a proposal that is approved by all agents in $\cV$.
These problems model the challenge of finding a good proposal in a centralized way.

As {\sc Perfect Score} is a special case of {\sc Score},
an NP-hardness result for {\sc Perfect Score} implies that {\sc Score}, 
too, is NP-hard, whereas a polynomial-time algorithm for {\sc Score}
can be used to solve {\sc Perfect Score} in polynomial time.

We will also consider another computational problem, which captures
the complexity of the decentralized deliberation process: in $k$-{\sc Compromise},
we are given a deliberative space and a deliberative coalition
structure, and the goal is to find a $k$-compromise transition
from this coalition structure if one exists; note that $k$-{\sc Compromise} 
is a function problem and not a decision problem.

For all three problems, we use prefixes \textit{Euc} and \textit{Hyp} to indicate
whether we consider the variant of the problem for the Euclidean space
or for the hypercube.

Given a deliberation space $I=(\cS,\cV,r,\rho)$,
we denote by $|I|$ the number of bits in the description of $I$.
For $d$-hypercube deliberation spaces, this is simply $O(nd)$
(we need to specify $d$ bits per agent). For Euclidean
deliberation spaces, we assume that the locations of all agents are 
vectors of rational numbers given in binary; similarly, 
when we consider $k$-{\sc Compromise}, we assume that the proposals
supported by the coalitions in the initial deliberative coalition structure
are vectors of rational numbers given in binary.

We will first consider hypercubes, and then
Euclidean spaces.

\subsection{Hypercube Deliberation Spaces}
In hypercube deliberation spaces, even deciding whether
there is a unanimously approved proposal is computationally difficult.

\begin{theorem}\label{thm:hypercubeNPHard}
\text{Hyp}-{\sc Perfect Score} is NP-complete.
\end{theorem}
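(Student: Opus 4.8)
The plan is to show membership in NP directly and to prove hardness by reduction from 3-SAT, after first recasting approval in purely combinatorial terms. Membership is immediate: a proposal $x\in\{0,1\}^d\setminus\{r\}$ is a polynomial-size certificate, and for each agent $v_i$ we can compare $\rho(v_i,x)$ with $\rho(v_i,r)$ in time $O(d)$, so checking that all $n$ agents approve $x$ takes time $O(nd)$. For the reformulation, write $S_i=\{j:v_{ij}=1\}$ for the support of $v_i$ and $T=\{j:x_j=1\}$ for the support of a proposal $x$. A short calculation gives $\rho(v_i,x)=|T\setminus S_i|+|S_i\setminus T|$ and $\rho(v_i,r)=|S_i|$, so $v_i$ approves $x$ exactly when $|T\cap S_i|>|T\setminus S_i|$, i.e.\ when a strict majority of the selected coordinates lie in $S_i$. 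Thus Hyp-{\sc Perfect Score} asks whether there is a nonempty $T\subseteq[d]$ with $|T\cap S_i|>|T|/2$ for every agent $i$; equivalently, whether the $\pm1$ matrix $M$ with $M_{ij}=+1$ for $j\in S_i$ and $-1$ otherwise admits a nonzero $0/1$ selection vector all of whose row-sums are at least one. This ``majority-coverage'' form is the object I would reduce to.

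For hardness I would reduce from 3-SAT. Given a formula on variables $x_1,\dots,x_n$ with clauses $C_1,\dots,C_m$, I introduce for each variable $x_v$ two \emph{bundles} of mutually identical coordinates, one representing the literal $x_v$ and one representing $\lnot x_v$; duplicating coordinates gives literals effective integer weights, and since coordinates within a bundle have identical columns in $M$, only the number of coordinates selected from each bundle affects any agent's signed sum. A truth assignment is meant to correspond to selecting exactly one bundle per variable. I then add two kinds of agents. \emph{Consistency} agents (one or two per variable) have sign patterns chosen so that selecting both bundles, or neither bundle, of some variable drives that agent's signed sum below the majority threshold; these force every all-approved proposal to use each bundle fully-or-not-at-all and to pick exactly one bundle per variable, hence to encode a genuine assignment of fixed total size. \emph{Clause} agents encode satisfaction: the agent for $C_\ell$ is positive on the bundles of the (at most three) literals occurring in $C_\ell$ together with a fixed \emph{buffer} made of both bundles of a precisely chosen number of other variables, and negative elsewhere, so that once the size is pinned down its strict-majority test passes iff at least one literal of $C_\ell$ is set true. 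A block of dummy variables that occur in no clause is added to control parity and to supply enough buffer material; all agents have nonempty support, so the requirement that each agent approve some proposal is met, and the construction is plainly polynomial.

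Correctness then runs in both directions: a satisfying assignment yields a consistent proposal approved by every consistency and clause agent, while any all-approved proposal is forced by the consistency agents to be consistent, after which the clause agents certify that the induced assignment satisfies the formula. The main obstacle is exactly this calibration. Unlike the real linear-programming picture, where ``strictly closer than the origin'' would be a system of strict linear inequalities solvable in polynomial time, here the approval threshold $|T|/2$ \emph{floats} with the size of the chosen proposal, so a gadget that looks locally decisive can be swamped by the contributions of unrelated selected coordinates. Making the consistency and clause constraints hold simultaneously requires pinning down the effective proposal size (through the bundle weights and the dummy variables) so that each agent's strict-majority test collapses to the intended local condition; the delicate part of the argument is verifying that no unintended proposal---a partial bundle, an inconsistent selection, or an off-size set---can slip past all the agents.
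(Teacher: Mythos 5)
Your NP-membership argument and your combinatorial reformulation are both correct and match the paper's starting point: the condition $|T\cap S_i|>|T|/2$ is exactly the paper's ``characteristic inequality'' $1+\sum_{j\notin S_i}x_j\le\sum_{j\in S_i}x_j$, and the paper likewise exploits the fact that subsets of $[d]$ are in bijection with inequalities in which every coordinate appears with coefficient $\pm1$ on one side or the other. The paper reduces from {\sc Independent Set} rather than 3-SAT, but the encoding machinery it uses (two coordinates per variable, plus auxiliary coordinates pinned to $1$) is essentially the bundling-and-calibration scheme you describe, so your choice of source problem is not the issue.

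The genuine gap is that you never construct the consistency agents, and that construction is the entire technical content of the hardness proof. You write that they ``have sign patterns chosen so that'' bad selections fail the majority test, but an agent here is not a free linear constraint: it is forced to be of the form $\sum_{j\in S}x_j-\sum_{j\notin S}x_j\ge 1$, with every coordinate present and no coefficient $0$ or $2$, and the right-hand side fixed at $1$. Forcing an equality such as ``select all of bundle $P_v$ or none of it,'' or ``select exactly one of $P_v,N_v$,'' requires pairing two such inequalities whose sum collapses to the desired equation, and that only works if you have auxiliary coordinates already pinned to $1$ to absorb the two units of slack coming from the two $+1$'s. The paper devotes $2\kappa-1$ extra dimensions to exactly this: each $\alpha$ is pinned to $1$ by a pair of agents whose inequalities sum to $2\alpha\ge 2$, and only then can four further agents per variable derive $x_i=x_i'$. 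Your proposal has no analogue of this pinning step, and ``one or two agents per variable'' is too few to force both the upper and lower bounds needed for full-bundle, exactly-one-bundle selection. You correctly identify this calibration as ``the delicate part,'' but identifying it is not the same as doing it; as written the reduction is a plan whose correctness hinges on gadgets that have not been exhibited.
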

We prove Theorem~\ref{thm:hypercubeNPHard} by a reduction from {\sc Independent Set}~\cite{gj79}; proof provided in the appendix.

Since {\sc Score} is at least as hard as {\sc Perfect Score}, 
and it is clearly in NP, we obtain the following corollary.

\begin{corollary}
{\it Hyp}-{\sc Score} is NP-complete.
\end{corollary}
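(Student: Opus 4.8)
The corollary follows from two short steps, so I would prove it as stated by separating membership from hardness. For membership in NP, I would use the proposal $x\in\{0,1\}^d$ itself as a witness: given $x$, one computes $\rho(v_i,x)$ and $\rho(v_i,r)$ for every agent in time $O(nd)$ and checks whether at least $\eta$ agents satisfy $\rho(v_i,x)<\rho(v_i,r)$, so \textit{Hyp}-{\sc Score} is in NP. For hardness, the key point is that \textit{Hyp}-{\sc Perfect Score} is exactly the restriction of \textit{Hyp}-{\sc Score} to the threshold $\eta=n$, since a proposal has score $n$ iff it is approved by every agent. The identity map is therefore a trivial polynomial-time reduction from \textit{Hyp}-{\sc Perfect Score} to \textit{Hyp}-{\sc Score}, and NP-hardness of the latter follows immediately from Theorem~\ref{thm:hypercubeNPHard}. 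Combining the two steps gives NP-completeness.

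Since all of the content is inherited from Theorem~\ref{thm:hypercubeNPHard}, I would want to be confident in the reduction behind it, and I would first recast approval combinatorially. Writing $S=\mathrm{supp}(x)$ and noting $\rho(v_i,r)=|\mathrm{supp}(v_i)|$, a one-line computation gives $\rho(v_i,x)-\rho(v_i,r)=|S|-2\,|S\cap\mathrm{supp}(v_i)|$; hence $v_i$ approves $x$ iff a strict majority of the coordinates of $S$ lie in $\mathrm{supp}(v_i)$. Thus {\sc Perfect Score} asks for a nonempty $S\subseteq[d]$ that intersects every agent's support in a strict majority of its own elements. To reduce {\sc Independent Set}, I would introduce one coordinate per vertex (so that $S$ selects a vertex subset), add for each edge an agent whose support omits both of its endpoints, and add control agents together with a block of padding coordinates to fix the size of the selection.

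The main obstacle is that approval is a \emph{global} majority whose threshold scales with $|S|$: an edge agent with support $[d]\setminus\{u,w\}$ disapproves when both $u,w\in S$ only if $|S|\le 4$, so one cannot enforce the local constraint ``at most one endpoint of each edge is chosen'' without simultaneously controlling $|S|$. The heart of the argument is therefore a size-forcing gadget that pins a fixed padding block inside $S$ and fixes $|S\cap V|=k$, so that each edge agent's majority margin is shifted to lie strictly between $1$ and $2$ and ``both endpoints chosen'' becomes its unique violation. With this in hand I would verify both directions — an independent set of size $k$ (plus the padding) is approved by every agent, and conversely any unanimously approved proposal selects exactly $k$ pairwise non-adjacent vertices — which yields NP-hardness of \textit{Hyp}-{\sc Perfect Score} and, by the first paragraph, of \textit{Hyp}-{\sc Score}.
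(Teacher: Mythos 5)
Your first paragraph is exactly the paper's argument: \textit{Hyp}-{\sc Score} is in NP with the proposal as witness, and NP-hardness follows because \textit{Hyp}-{\sc Perfect Score} is the special case $\eta=n$, so Theorem~\ref{thm:hypercubeNPHard} transfers immediately. The additional two paragraphs sketching a fresh {\sc Independent Set} reduction are not needed for this corollary (which uses Theorem~\ref{thm:hypercubeNPHard} as a black box) and take a different route from the paper's appendix construction, which encodes each constraint as a ``characteristic inequality'' over doubled coordinates $x_i,x_i'$ plus auxiliary $\alpha$-coordinates rather than via edge agents and a size-forcing gadget; your majority reformulation of approval is correct, but that sketch is left incomplete and should not be taken as a substitute for the theorem's proof.
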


Moreover, for hypercube deliberation spaces, we can show that
computing a $k$-compromise is at least as hard as
finding a proposal with a perfect score.

\begin{corollary}
For each $k\ge 2$, there does not exist
a polynomial-time algorithm for \textit{Hyp}-$k$-{\sc Compromise}
unless P=NP.
\end{corollary}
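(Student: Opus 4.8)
The plan is to reduce from \textit{Hyp}-{\sc Perfect Score}, which is NP-complete by Theorem~\ref{thm:hypercubeNPHard}. Since $k$-{\sc Compromise} is a function problem, a polynomial-time algorithm for it would in particular let us decide, for any coalition structure $\bD$, whether \emph{some} $k$-compromise transition $(\bD,\bD')$ exists (run the algorithm and check whether it returns a transition or reports that none exists). So it suffices to build, in polynomial time, a hypercube deliberation space together with a coalition structure $\bD$ such that $\bD$ admits a $k$-compromise transition if and only if the original instance has a unanimously approved proposal. A convenient feature is that the construction can use a structure with \emph{exactly two} coalitions: then any transition involves $\ell\le 2$ coalitions, and $\ell=1$ is impossible (a new coalition $C\subseteq C_1$ cannot satisfy $|C|>|C_1|$), so the only candidate transition is the merge of the two coalitions. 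Because $\ell=2$ is a legal $k$-compromise for every $k\ge 2$, one and the same construction establishes the claim simultaneously for all $k\ge 2$.

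For the construction, first I would take the $n$ agents $v_1,\dots,v_n$ of the {\sc Perfect Score} instance and arrange them into a large coalition $C_0$ of size $n-1$ and a singleton coalition. The point of these sizes is the threshold condition in the definition of a $k$-compromise: merging $C_0$ with the singleton requires a new coalition $C$ with $|C|>\max\{n-1,1\}=n-1$, hence $|C|=n$, i.e.\ the new proposal must be approved by \emph{all} agents. Thus, up to the embedding described below, a transition exists exactly when there is a proposal approved by everyone. The difficulty is that a coalition is by definition a pair $(C,x)$ in which \emph{every} member of $C$ approves $x$, so $C_0$ can only be a legal starting coalition if its $n-1$ agents share a commonly approved proposal --- which in general they do not. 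I would fix this by adding auxiliary coordinates in which the members of $C_0$ are given a common ``boost'', so that the all-zero-on-the-original-coordinates proposal becomes approved by all of $C_0$ and $C_0$ is legal.

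The main obstacle, and the heart of the proof, is that this boost enlarges every agent's approval ball and therefore threatens to create \emph{spurious} compromises: a proposal could be approved by the whole merged coalition by exploiting the extra slack in the auxiliary coordinates rather than by genuinely being approved by all agents in the original space. The key device I would use to defeat this is \emph{complementary} auxiliary coordinates: pair each boosting coordinate with a mirror coordinate placed so that, under the Hamming metric, the tolerance gained by a candidate merge-proposal in the boosting coordinate is exactly paid back in the mirror coordinate once \emph{every} agent is required to approve. Concretely, for a padding pattern $c$ one has $\mathrm{wt}(c)-\rho(c,z)=-(\mathrm{wt}(\bar c)-\rho(\bar c,z))$, so requiring approval with respect to both $c$ and $\bar c$ collapses the slack and reduces the condition $\rho(v_i,y)<\mathrm{wt}(v_i)\pm(\text{slack})$ back to the original approval test $\rho(v_i,y)<\mathrm{wt}(v_i)$. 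The bulk of the work is to choose the padding and distribute the complementary coordinates between the two coalitions so that (i) $C_0$ remains a legal coalition, yet (ii) a proposal is approved by the entire merged coalition if and only if its projection to the original coordinates is a perfect-score proposal of the input instance; this equivalence, and ruling out any other size-increasing proposal, is where I expect all the technical effort to go. Granting the construction, detecting whether $\bD$ admits a $k$-compromise decides \textit{Hyp}-{\sc Perfect Score}, so no polynomial-time algorithm for \textit{Hyp}-$k$-{\sc Compromise} can exist unless $\mathrm{P}=\mathrm{NP}$.
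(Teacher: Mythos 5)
Your high-level strategy is the same as the paper's: reduce from \textit{Hyp}-{\sc Perfect Score} via a two-coalition structure with sizes $n-1$ and $1$, so that the only available transition is the merge, which by the size condition $|C|>n-1$ forces the new proposal to be approved by all $n$ agents (and a single construction works for every $k\ge 2$). You also correctly isolate the one obstacle: the $(n-1)$-agent coalition is only a legal deliberative coalition if its members share a commonly approved proposal. Where you diverge from the paper is in how you handle that obstacle, and that is where there is a genuine gap.

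Your fix is a one-shot padding gadget, and the crucial backward direction --- that every proposal approved by the entire merged coalition projects to a perfect-score proposal of the original instance --- is exactly the part you leave unproved (``where I expect all the technical effort to go''). As sketched, it does not go through. If the members of $C_0$ carry a padding pattern $c$ and the singleton carries $\bar c$ on the same auxiliary block, then for a candidate proposal with auxiliary part $z$ the slack gained by a $C_0$ member is $2|c\wedge z|-|z|$ and the slack of the singleton is exactly the negative of that; but approval is tested \emph{per agent}, not in aggregate, so a proposal with slack $\sigma>0$ need only satisfy $\rho(v_i,y_o)<|v_i|+\sigma$ for each $v_i\in C_0$ and $\rho(v_n,y_o)<|v_n|-\sigma$ for the singleton. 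Nothing forces $\sigma=0$, so a spurious merge can exist even when the original instance has no perfect-score proposal; your identity $\mathrm{wt}(c)-\rho(c,z)=-(\mathrm{wt}(\bar c)-\rho(\bar c,z))$ merely transfers slack between the two coalitions rather than eliminating it, and there is no mechanism forcing the proposal to treat paired blocks identically. The paper sidesteps all of this by using a Turing reduction instead of a one-shot construction: it makes $n$ oracle calls, adding one agent at a time, and at step $i$ it uses the proposal $x_{i-1}$ returned by the previous call as the (already certified) proposal of the coalition $\{v_1,\dots,v_{i-1}\}$, so the large coalition is legal by construction, no auxiliary coordinates are introduced, and the merge condition is exactly unanimity over the first $i$ agents. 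To salvage your one-shot version you would need additional gadgetry that provably penalizes any nonzero auxiliary part of the proposal without itself creating or destroying transitions; none of that is currently in your argument.
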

\begin{proof}
Suppose we have a polynomial-time algorithm for \textit{Hyp}-$k$-{\sc Compromise}
for some $k\ge 2$. We will explain how it can be used to solve
\textit{Hyp}-{\sc Perfect Score} in polynomial time.

Given an instance $I$ of \textit{Hyp}-{\sc Perfect Score},
we proceed inductively as follows. For each $i\in [n]$, let
$I_i$ be our instance of \textit{Hyp}-{\sc Perfect Score}
restricted to the first $i$ agents. We will explain how to find
a proposal $x_i$ approved by each of the agents $v_1 \dots, v_i$
if it exists (and output `no' if it does not).
For $I_1$, we can set $x_1=v_1$.
To solve $I_i$, we first solve $I_{i-1}$. If the answer for 
$I_{i-1}$ is `no', then we output `no' as well.
Otherwise, we form an instance of \textit{Hyp}-$k$-{\sc Compromise}
that contains the first $i$ agents; the initial deliberative
coalition structure consists of $(\{v_1, \dots, v_{i-1}\}, x_{i-1})$
and the singleton deliberative coalition $(\{v_i\}, v_i)$.
We can then run the algorithm for \textit{Hyp}-$k$-{\sc Compromise}, $k\ge 2$,
on this instance; it returns a proposal 
if and only if $I_i$ is a yes-instance of {\sc Perfect Score}.
\end{proof}

On the positive side, the problem of finding a popular proposal
becomes easy if the number of agents or the number of dimensions is small:
indeed, we can obtain an FPT algorithm 
with respect to each of these parameters.

\begin{proposition}\label{thm:hypercubeFPT-d}
Given a $d$-hypercube deliberation space, we can compute a popular proposal
in time $O(2^ddn)$. 
\end{proposition}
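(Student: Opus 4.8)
The plan is to exploit the fact that, unlike in the Euclidean setting, the entire proposal space of a $d$-hypercube deliberation space is finite: $\cX=\{0,1\}^d$ contains only $2^d$ points. Since a popular proposal is by definition a proposal of maximum score in $\cX\setminus\{r\}$, I would simply enumerate all candidates and keep the best one. This brute-force strategy is viable precisely because $|\cX|=2^d$ is the dominant factor in the target running time $O(2^d d n)$.

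Concretely, I would iterate over all $2^d$ vertices $x$ of the hypercube and, for each, compute its score, i.e.\ the number of agents who approve it. Recall that an agent $v_i$ approves $x$ iff $\rho(v_i,x)<\rho(v_i,r)$. Because $r=(0,\dots,0)$, the quantity $\rho(v_i,r)$ is just the Hamming weight of $v_i$, and both this weight and the Hamming distance $\rho(v_i,x)$ can be evaluated in $O(d)$ time. Thus, for a fixed $x$, scanning all $n$ agents and counting approvals costs $O(nd)$ time. Maintaining a running maximum score together with an argmax proposal adds only constant overhead per candidate.

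Summing the $O(nd)$ per-proposal cost over all $2^d$ vertices yields a total running time of $O(2^d d n)$, matching the claimed bound. Correctness is immediate from exhaustiveness: since we examine every element of $\cX\setminus\{r\}$ and return one attaining the maximum score, the output is popular by definition. The one point worth noting explicitly is that the status quo $r$ never interferes with the enumeration—no agent can approve $r$ (the strict inequality $\rho(v_i,r)<\rho(v_i,r)$ fails), so $r$ has score $0$ and can be freely included or excluded from the sweep without affecting the maximizer. There is no substantive obstacle here; the only care needed is in the accounting, to confirm that each distance computation is genuinely $O(d)$ so that the bound comes out as stated.
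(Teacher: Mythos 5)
Your proof is correct and follows essentially the same approach as the paper's: exhaustively enumerate the $2^d$ hypercube vertices, compute each score in $O(dn)$ time via Hamming distances, and return the maximizer. The extra remark about excluding the status quo is a harmless refinement that the paper also implicitly handles.
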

\begin{proof} 
We can go through the proposals one by one;
it takes $O(d)$ time to decide whether a given agent prefers a given proposal to the status quo, so we can determine the score of each proposal in time $O(dn)$. As there are $2^d-1$
proposals other than the status quo, the bound on the running time follows.
%
\end{proof}

\begin{proposition}\label{thm:hypercubeFPT-n}
Given a $d$-hypercube deliberation space, we can compute 
a popular proposal in time
$\textrm{poly}(2^{n2^n}, \log d)$.
\end{proposition}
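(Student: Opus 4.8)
The plan is to exploit the fact that, although $d$ may be astronomically large, the restriction of all agents' data to a single coordinate is just a vector in $\{0,1\}^n$, so there are at most $2^n$ essentially distinct coordinates. First I would rewrite the approval condition in a convenient linear form. Since $r=(0,\dots,0)$, the distance $\rho(v_i,r)$ equals the number of $1$'s in $v_i$, and for a candidate proposal $x$ with support $S=\{j:x_j=1\}$ a one-line calculation gives $\rho(v_i,x)-\rho(v_i,r)=\sum_{j\in S}(1-2(v_i)_j)$. Hence $v_i$ approves $x$ if and only if $\sum_{j\in S}(1-2(v_i)_j)<0$, i.e.\ among the coordinates of $S$ the agent carries a $1$ in strictly more than half of them.

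The key step is to collapse coordinates into types. Group the coordinates $j\in[d]$ according to the length-$n$ column $c=((v_1)_j,\dots,(v_n)_j)\in\{0,1\}^n$, and let $m_c$ be the number of coordinates of type $c$, so there are at most $2^n$ distinct types and $\sum_c m_c=d$ (these multiplicities can be obtained in a preprocessing pass over the input). Because two coordinates of the same type contribute identically to every agent's sum, the score of a proposal depends only on how many coordinates of each type are placed in $S$. It therefore suffices to search over integer vectors $(s_c)_c$ with $0\le s_c\le m_c$, where $s_c$ counts the type-$c$ coordinates in $S$; under such a choice agent $i$ approves iff $\sum_c s_c(1-2c_i)<0$, with $c_i$ the $i$-th bit of $c$.

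Now fix a target set $A\subseteq[n]$ of agents we wish to make approve. As each sum $\sum_c s_c(1-2c_i)$ is an integer, the strict inequality $<0$ is equivalent to $\le-1$, so feasibility of $A$ reduces to deciding whether the system $\{0\le s_c\le m_c\}\cup\{\sum_c s_c(1-2c_i)\le-1:i\in A\}$ has an integer solution. This is an integer linear feasibility problem in at most $2^n$ variables, with coefficients in $\{-1,0,1\}$ and right-hand sides bounded by $d$, hence of input bit-length $L=\mathrm{poly}(2^n,\log d)$. Invoking integer programming in fixed dimension (Lenstra, with Kannan's refinement), feasibility in dimension $\delta$ can be decided, and a witness produced, in time $\delta^{O(\delta)}\,\mathrm{poly}(L)$; with $\delta\le 2^n$ this is $(2^n)^{O(2^n)}\,\mathrm{poly}(\log d)=\mathrm{poly}(2^{n2^n},\log d)$ per set $A$.

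Finally I would loop over all $2^n$ candidate sets $A$, keep a feasible one of maximum cardinality, and reconstruct a proposal by setting, for each type $c$, any $s_c$ of its coordinates to $1$. Correctness follows because the approval set of a genuinely popular proposal is itself feasible, while every feasible $A$ certifies a proposal of score at least $|A|$; thus the maximum of $|A|$ over feasible $A$ equals the optimal score. The main obstacle, and the crux of matching the stated bound, is the grouping step: one must bound the number of variables by $2^n$ rather than $d$, so that the $\delta^{O(\delta)}$ cost of bounded-dimension integer programming becomes $2^{O(n2^n)}$ while the dependence on $d$ survives only logarithmically, through the magnitudes $m_c$.
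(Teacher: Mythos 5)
Your proof is correct and follows essentially the same route as the paper's: collapse the $d$ coordinates into at most $2^n$ types determined by the column of agent bits, encode a proposal by per-type counts, and for each candidate set of approving agents solve a bounded-dimension ILP via Lenstra's algorithm, looping over all $2^n$ subsets. The only (immaterial) difference is that the paper constrains the proposal to be approved by \emph{exactly} the chosen subset, whereas you only require approval by at least the chosen subset and then maximize its size; both yield the stated bound.
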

\begin{proof}
Given an $n$-bit vector $\bb = (b_1, \dots, b_n)\in\{0, 1\}^n$, 
we say that a dimension $j\in [d]$ is of {\em type} $\bb$ 
if for each $i\in [n]$ the $j$-th coordinate of $v_i$ 
is equal to $b_i$. Thus, each dimension belongs to one of the $2^n$ possible types.
For a type $\bb$, let $n_{\bb}$ denote the number of dimensions of type $\bb$. 
We can then represent a proposal $x$ by a sequence of numbers 
$(x_{\bb})_{\bb\in \{0, 1\}^n}$, where $0\le x_{\bb}\le n_{\bb}$:
the value $x_{\bb}$ indicates the number of dimensions of type $\bb$
that are set to $1$ in $x$. Now, for each 
subset of agents $S \subseteq \cV$, we formulate a set of constraints
on the values $(x_{\bb})_{\bb\in \{0, 1\}^n}$ which ensure 
that $x$ is approved exactly by the agents in $S$.

\begin{enumerate}
    \item For each agent $i$ in $S$,
    \begin{multline*}
        \sum_{\bb : b_i = 0} x_{\bb} + \sum_{\bb : b_i = 1} (n_{\bb} - x_{\bb}) < \sum_{\bb : b_i = 1} n_{\bb} \\
        \Longleftrightarrow \sum_{\bb : b_i = 0} x_{\bb} - \sum_{\bb : b_i = 1} x_{\bb} \le -1.
    \end{multline*}
    In the constraint above, $\sum_{\bb : b_i = 0} x_{\bb}$ counts the number of dimensions where the agent has value $0$, but the proposal has value $1$, while $\sum_{\bb : b_i = 1} (n_{\bb} - x_{\bb})$ counts the number of dimensions where the agent has value $1$ but the proposal has value $0$; overall, it measures the distance of the agent from the proposal. $\sum_{\bb : b_i = 1} n_{\bb}$ measures the distance of the agent from the status quo.
    \item For each agent $i$ not in $S$,
    \begin{multline*}
        \sum_{\bb : b_i = 0} x_{\bb} + \sum_{\bb : b_i = 1} (n_{\bb} - x_{\bb}) \ge \sum_{\bb : b_i = 1} n_{\bb} \\
        \Longleftrightarrow \sum_{\bb : b_i = 0} x_{\bb} - \sum_{\bb : b_i = 1} x_{\bb} \ge 0.
    \end{multline*}
    \item Feasibility constraints: for each $\bb \in \{0,1\}^n$, $0 \le x_{\bb} \le n_{\bb}.$
\end{enumerate}
The constraints above form an ILP with $2^n$ variables and $n + 2^n$ constraints. Lenstra’s algorithm (and subsequent improvements of it) can solve an ILP in time exponential in the number of variables, but linear in the number of bits required to represent the problem (constraints). As $n_{\bb}\le d$ and can be represented in $O(\log d)$ bits
for each $\bb\in\{0, 1\}^n$, we obtain a time complexity of $\textrm{poly}(2^{n2^n}, \log d)$. We can find the optimal proposal by searching over all $2^n$ subsets of agents $S \subseteq \cV$. The overall running time is then 
$2^n\cdot \textrm{poly}(2^{n2^n}, \log d) = \textrm{poly}(2^{n2^n}, \log d)$. 
\end{proof}

\subsection{Euclidean Deliberation Spaces}
Recall that, in a Euclidean space, 
an agent $v$ approves approves a proposal $p$ if and only if
\begin{equation}\label{eq:prf:euclidFPT}
    \rho(v,p) < \rho(v,0) \Longleftrightarrow || p ||^2 < 2 \langle v , p \rangle.
\end{equation}
That is, a given proposal $p$ splits $\bR^d$ into two half-spaces; the half-spaces are divided by the hyper-plane orthogonal to and bisecting the line segment joining $p$ to the origin. We will use this correspondence between proposals and half-spaces 
throughout this section.

We first observe that, in contrast to hypercube spaces, 
for Euclidean deliberation spaces
the {\sc Perfect Score} problem is easy.

\begin{proposition}\label{prop:euc-score-n}
\textit{Euc}-{\sc Perfect Score} is polynomial-time solvable.
\end{proposition}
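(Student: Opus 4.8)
The plan is to reformulate \textit{Euc}-{\sc Perfect Score} as a geometric feasibility question and then as a linear program. Recall from equation~\eqref{eq:prf:euclidFPT} that agent $v_i$ approves a proposal $p$ if and only if $||p||^2 < 2\langle v_i, p\rangle$. A proposal is approved by \emph{all} agents precisely when $p$ satisfies all $n$ of these inequalities simultaneously. The key difficulty is that each constraint $||p||^2 < 2\langle v_i, p\rangle$ is quadratic in the unknown $p$ (the term $||p||^2$ is common to all constraints), so the feasible region is not obviously a polytope.

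The first step I would take is to eliminate the shared quadratic term. Since $||p||^2$ appears with the same sign in every constraint, I would introduce a single auxiliary variable $s$ to stand for $||p||^2$ and replace each constraint by $s < 2\langle v_i, p\rangle$, i.e.\ $2\langle v_i, p\rangle - s > 0$. These are now \emph{linear} in the $d+1$ unknowns $(p, s)$. The crucial observation is that we do not actually need $s$ to equal $||p||^2$ exactly: if there exists any $p$ satisfying all the original quadratic constraints, then setting $s = ||p||^2$ gives a feasible point of the linear system; conversely, I would argue that feasibility of the relaxed linear system (with $s$ unconstrained or suitably bounded) lets us recover a genuinely approved proposal. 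Concretely, if $(p, s)$ satisfies $2\langle v_i, p\rangle > s$ for all $i$, then scaling $p$ by a small factor $\lambda > 0$ makes $||\lambda p||^2 = \lambda^2 ||p||^2$ arbitrarily small while $2\langle v_i, \lambda p\rangle = \lambda \cdot 2\langle v_i, p\rangle$ decreases only linearly; so for small enough $\lambda$ we get $||\lambda p||^2 < 2\langle v_i, \lambda p\rangle$ for every $i$ simultaneously, provided each $\langle v_i, p\rangle > 0$. This reduces the problem to deciding whether the open polyhedral cone $\{p : \langle v_i, p\rangle > 0 \text{ for all } i\}$ is nonempty.

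The second step is to decide nonemptiness of this open cone, which is a strict linear feasibility problem solvable in polynomial time. I would convert the strict inequalities $\langle v_i, p\rangle > 0$ into the form $\langle v_i, p\rangle \ge 1$ by homogeneity (if some $p$ gives all inner products strictly positive, scaling makes them all at least $1$, and conversely), yielding a standard LP feasibility instance that can be solved by any polynomial-time LP algorithm (e.g.\ the ellipsoid method or interior-point methods), with running time polynomial in $|I|$ since the agent coordinates are rational numbers given in binary. If this LP is feasible, I recover a $p$ with all $\langle v_i, p\rangle > 0$, then pick $\lambda$ small enough (its bit-complexity can be bounded using the LP solution and the input sizes) to obtain an actual perfectly-approved proposal $\lambda p$; otherwise I report that no such proposal exists.

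The main obstacle I anticipate is the bookkeeping around the strict-versus-nonstrict inequalities and the scaling argument: I must verify that feasibility of the homogenized cone problem is genuinely equivalent to the existence of a perfectly-approved proposal, and that the witness $\lambda p$ has polynomial bit-length so that the overall procedure runs in polynomial time. The geometric intuition is clean---a proposal approved by $v_i$ is one lying in the half-space $\langle v_i, p\rangle > \tfrac12 ||p||^2$, and as $p \to 0$ these half-spaces approach the half-spaces $\langle v_i, p\rangle > 0$ through the origin---but making the equivalence and the bit-size bounds fully rigorous is where the care is needed.
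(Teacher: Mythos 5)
Your proposal is correct and takes essentially the same route as the paper: the paper's proof also reduces the problem to checking feasibility of the strict linear system $\langle v, x\rangle > 0$ for all $v \in \cV$ (a hyperplane through the origin with all agents strictly on one side) and then recovers a proposal by scaling, setting $p = \epsilon x/||x||$ for sufficiently small $\epsilon > 0$. Your additional care about homogenization and the bit-length of the scaling factor is sound but not a different argument.
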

\begin{proof}
It suffices to check whether there exists a hyperplane $\mathcal H$ 
passing through $r$ such that the entire set $\cV$ lies
on the same side of this hyperplane.
This is equivalent to checking whether there exists an $x \in \bR^d$ that satisfies the following linear program: $\langle v , x \rangle > 0, \forall v \in \cV$. 
Once we find such an $x$, we can choose the 
proposal to be $p = \epsilon x / ||x|| $ for sufficiently small $\epsilon > 0$. 
\end{proof}

However, finding a proposal that enjoys a specific degree of support
turns out to be computationally challenging, 

\begin{theorem}\label{thm:euclidNPHard}
\textit{Euc}-{\sc Score} is NP-complete.
\end{theorem}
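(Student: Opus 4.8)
The plan is to establish NP-completeness of \textit{Euc}-{\sc Score} in two parts. Membership in NP is the easy direction: given a proposed answer, a certificate is simply a proposal $p \in \bR^d$ (a rational vector), and we can verify in polynomial time that at least $\eta$ agents approve $p$ by checking the inequality $\|p\|^2 < 2\langle v, p\rangle$ from \eqref{eq:prf:euclidFPT} for each agent. One subtlety I would flag is that the optimal proposal might require large or awkward coordinates, so I would argue that a short certificate always exists---for instance, by exploiting the half-space correspondence: the score of a proposal depends only on which of the $n$ half-spaces (one per agent) contain the origin-type constraint, and there are only polynomially many combinatorially distinct configurations, each realizable by a rational vector of polynomially bounded bit-length.

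For NP-hardness, the natural strategy is a reduction from a known NP-hard problem exhibiting the same ``choose a subset that maximizes coverage subject to geometric/combinatorial constraints'' flavor. Since the companion hypercube result (Theorem~\ref{thm:hypercubeNPHard}) reduces from {\sc Independent Set}, and since the Euclidean approval condition $\|p\|^2 < 2\langle v, p\rangle$ says geometrically that $v$ lies strictly on the far side of the hyperplane bisecting the segment from the origin to $p$, I would look for a reduction from a problem about separating points by a hyperplane or about maximizing the number of points in an open half-space. A clean candidate is {\sc Open Hemisphere} / {\sc Maximum Feasible Subsystem of linear inequalities}, which asks for a vector $x$ satisfying the maximum number of strict inequalities $\langle a_i, x\rangle > 0$; this is known to be NP-hard. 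The idea is that choosing the direction $x$ is exactly choosing the hyperplane through the origin, and approval of agent $v$ by proposal $p = \epsilon x/\|x\|$ corresponds (for small $\epsilon$) precisely to $\langle v, x\rangle > 0$, just as in the proof of Proposition~\ref{prop:euc-score-n}.

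Concretely, I would take an instance of the maximization version of {\sc Max-FS} (or {\sc Open Hemisphere}) with vectors $a_1, \dots, a_m \in \bR^d$ and place an agent at each $a_i$, set the status quo at the origin, and ask whether {\sc Score} $\ge \eta$ for the same threshold $\eta$. The forward direction is immediate from Proposition~\ref{prop:euc-score-n}'s construction: any feasible direction $x$ satisfying $\eta$ of the inequalities yields a proposal $p = \epsilon x/\|x\|$ approved by exactly those $\eta$ agents. The reverse direction requires showing that any proposal $p$ approved by $\eta$ agents induces a direction---take $x = p$---satisfying at least $\eta$ of the strict inequalities; here I must be careful that $\|p\|^2 < 2\langle v, p\rangle$ implies $\langle v, p\rangle > 0$, which holds since the left side is nonnegative. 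I would need to confirm that the source problem is NP-hard in its decision (threshold) form and in a dimension/encoding compatible with a polynomial reduction.

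The main obstacle I anticipate is twofold. First, pinning down the right source problem: the membership-in-NP argument and the hardness reduction both hinge on the strictness of the approval inequality and on the fact that approval is governed by the \emph{sign} of $\langle v, p\rangle$ once we are free to scale $p$ toward the origin, so I must ensure the chosen NP-hard problem (open half-space maximization) matches this \emph{open} (strict) condition rather than a closed one---otherwise the reduction breaks at the boundary. Second, the NP membership certificate size: I must argue rigorously that an optimal rational proposal of polynomial bit-length exists, which I would handle via the standard argument that an optimal configuration corresponds to a vertex of an arrangement of the $n$ bisecting hyperplanes, whose coordinates have bounded description length. If a direct off-the-shelf hardness result is unavailable, the fallback is to reduce from {\sc Independent Set} mirroring Theorem~\ref{thm:hypercubeNPHard}, embedding the graph so that the bisecting-hyperplane geometry enforces the independence constraint.
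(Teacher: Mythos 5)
Your argument is correct, but it takes a genuinely different route from the paper. You observe that, because approval of $p$ by $v$ forces $\langle v,p\rangle>0$ and conversely any direction $x$ with $\langle v,x\rangle>0$ for all $v\in S$ yields an approved proposal $\epsilon x$ for small enough rational $\epsilon$, the maximum score equals the maximum number of input points lying strictly inside an open half-space through the origin. Hence \textit{Euc}-{\sc Score} is, up to this scaling argument, \emph{identical} to the {\sc Open Hemisphere} problem, whose NP-completeness (with $d$ part of the input) is the classical result of Johnson and Preparata (1978) and appears as problem [MP7] in Garey and Johnson---so the caveat you flag about the source problem is resolved affirmatively, including for the strict-inequality version. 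The paper instead gives a self-contained reduction from 3-SAT, building agents of three types (heavy ``alignment'' agents, variable agents, and clause agents) with carefully chosen multiplicities so that an optimal proposal must encode a truth assignment. Your approach is shorter and makes the geometric content of the problem transparent, at the cost of invoking an external hardness result; the paper's is longer but self-contained. Your handling of NP membership (bounding the certificate via the arrangement of bisecting hyperplanes, or equivalently guessing the approving set and solving an LP) is also sound and in fact more careful than the paper's one-line remark. The only cosmetic point to add is that points at the origin can be discarded in the reduction, since they can neither lie in an open hemisphere nor approve any proposal.
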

We prove Theorem~\ref{thm:euclidNPHard} by a reduction from 3-SAT; proof provided in the appendix.

Just like for hypercubes, {\textit Euc}-{\sc Score}
is fixed-parameter tractable with respect to the number of agents.

\begin{proposition}\label{thm:euclidFPT-n}
{\textit Euc}-{\sc Score} can be solved in time $2^n\cdot\mathrm{poly}(|I|)$.
\end{proposition}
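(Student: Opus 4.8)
The plan is to exploit the observation recorded in Equation~\eqref{eq:prf:euclidFPT}: each proposal $p\in\bR^d$ is approved by agent $v$ precisely when $\langle v,p\rangle > \tfrac12\|p\|^2$, so the set of agents approving $p$ is determined by which side of the hyperplane bisecting the segment from the origin to $p$ each agent falls on. The key structural fact is that although the proposal space $\bR^d$ is infinite, the number of \emph{distinct approval sets} that a proposal can induce is at most $2^n$, since each approval set is a subset of the $n$ agents. Thus the real task is not to search over $\bR^d$ but to enumerate the at most $2^n$ achievable subsets $S\subseteq\cV$ and, for each one, decide whether some proposal $p$ has approval set exactly (or at least) $S$.

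First I would fix a target subset $S\subseteq\cV$ and ask whether there exists a proposal $p$ approved by all agents in $S$. By \eqref{eq:prf:euclidFPT}, this amounts to asking whether the system $\langle v_i,p\rangle > \tfrac12\|p\|^2$ for all $v_i\in S$ has a solution. The awkward feature is the quadratic term $\|p\|^2$, which at first glance makes this a nonconvex feasibility question rather than a linear program. The trick is to observe that membership in the approval set is scale-sensitive in a controlled way: writing $p=\lambda u$ for a unit direction $u$ and a scalar $\lambda>0$, the condition becomes $\lambda\langle v_i,u\rangle > \tfrac12\lambda^2$, i.e.\ $\langle v_i,u\rangle > \tfrac{\lambda}{2}$. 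Hence for a fixed direction $u$, as $\lambda$ ranges over $(0,\infty)$ the approval set is exactly $\{v_i : \langle v_i,u\rangle > \lambda/2\}$, which is monotone: it can only shrink as $\lambda$ grows. This reduces the problem to checking, for each direction, the nested family of half-space thresholds, and in particular the maximal approval set for direction $u$ is $\{v_i:\langle v_i,u\rangle>0\}$ attained in the limit $\lambda\to 0^+$.

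Consequently, the set of achievable approval sets is governed entirely by the open half-spaces through the origin, i.e.\ by the sign pattern of $(\langle v_1,u\rangle,\dots,\langle v_n,u\rangle)$ as $u$ ranges over directions; these are cut out by the $n$ hyperplanes $\langle v_i,u\rangle=0$, whose cells (sign vectors) number at most $2^n$. For each such sign pattern, I would solve a linear program to find a witness direction $u$ with $\langle v_i,u\rangle>0$ for the agents meant to be included at threshold $0$ and $\langle v_i,u\rangle<0$ (or any consistent threshold assignment) for the rest; each LP is solvable in $\mathrm{poly}(|I|)$ time. The maximum over all $S$ of $|S|$ for which a feasible witness exists is then compared against $\eta$. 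Enumerating the $2^n$ candidate subsets (equivalently, sign patterns) and solving one polynomial-size LP per subset yields total running time $2^n\cdot\mathrm{poly}(|I|)$, as claimed.

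The main obstacle I anticipate is handling the $\|p\|^2$ term cleanly so that the feasibility test for a fixed target set is genuinely a \emph{linear} program rather than a quadratic one; the scaling argument above is what dissolves this difficulty, converting the quadratic constraint into a one-parameter threshold condition and thereby showing that only the sign pattern of the inner products matters. A secondary care point is the strictness of the inequalities (open versus closed half-spaces), since a proposal must strictly beat the status quo; this I would address by the standard device of checking strict feasibility of an LP, perturbing with a small slack variable $\langle v_i,u\rangle \ge \delta$ and maximizing $\delta$, concluding that a strictly positive optimum certifies a valid proposal direction.
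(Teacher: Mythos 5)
Your proposal is correct and follows essentially the same route as the paper: reduce the quadratic approval condition to membership in an open half-space through the origin (by scaling $p=\lambda u$ with $\lambda\to 0^+$), then enumerate the $2^n$ subsets of agents and solve one linear program per subset. The paper states this more tersely by reusing its argument for \textit{Euc}-{\sc Perfect Score}, but the content is the same.
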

\begin{proof}
We reuse the argument in the proof of Proposition~\ref{prop:euc-score-n},
but apply it to every subset of agents.
That is, for every subset $S \subseteq \cV$ of agents, we check whether there exists a hyper-plane passing through the origin that has all the agents in $S$ strictly on one side of the hyper-plane, by solving a linear program. 
Altogether, we have to solve $2^n$ linear programs, so the bound on the running time holds.
\end{proof}

However, for the Euclidean case, we were unable to obtain an FPT algorithm
with respect to the number of dimensions. On the positive side, we can place our problem
in the complexity class XP with respect to this parameter, i.e., show that
it can be solved in polynomial time for the practically important case
where the dimension of the underlying Euclidean space is small.

\begin{proposition}\label{thm:euclidFPT-d}
{\textit Euc}-{\sc Score} can be solved
in time polynomial in $n^{d+1}$ and $|I|$.
\end{proposition}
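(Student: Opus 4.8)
The goal is to show \textit{Euc}-{\sc Score} is in XP with respect to the dimension $d$, i.e., solvable in time polynomial in $n^{d+1}$ and $|I|$. The plan is to reduce the problem to enumerating a polynomial (in $n^{d+1}$) number of candidate proposals, where for each candidate we can compute the score in polynomial time using the half-space correspondence of equation~\eqref{eq:prf:euclidFPT}.

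The key observation is the duality between proposals and half-spaces: each proposal $p \in \bR^d \setminus \{r\}$ corresponds to the open half-space $\{v : \langle v, p\rangle > \frac{1}{2}\|p\|^2\}$ of agents who approve it, bounded by the hyperplane orthogonal to the segment from the origin to $p$. The score of $p$ is exactly the number of agents strictly inside this half-space. Since there are $n$ agents, the distinct possible scores correspond to distinct ways a hyperplane can partition the agent set. First I would argue that we may assume the bounding hyperplane passes through $d$ of the agents: given any proposal maximizing the number of agents in its half-space, we can translate and rotate the bounding hyperplane---without losing any agent already strictly inside and without changing which proposal-direction it certifies---until it touches $d$ affinely independent agents (or fewer, handled as a degenerate case). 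The combinatorial fact driving this is that the number of distinct subsets of $n$ points in $\bR^d$ that can be cut off by a hyperplane is $O(n^d)$, and each such subset is witnessed by a hyperplane incident to at most $d$ of the points.

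Concretely, I would enumerate over all subsets of agents of size at most $d$; there are $\sum_{i=0}^{d}\binom{n}{i} = O(n^d)$ of these. For each such subset $T$, I would consider the affine subspaces it spans and, for each choice of which side to push the hyperplane toward, determine the candidate half-space whose boundary is anchored on $T$. For every candidate half-space I would count how many of the $n$ agents lie strictly inside it in time $O(nd)$, and from that half-space recover a witnessing proposal $p$ via the correspondence (choosing $p$ so that the induced bisecting hyperplane coincides with the candidate). Taking the maximum score over all $O(n^d)$ candidates, and comparing against the threshold $\eta$, solves \textit{Euc}-{\sc Score}. The total running time is $O(n^d)$ candidates times $\mathrm{poly}(n,d,|I|)$ per candidate, which is polynomial in $n^{d+1}$ and $|I|$.

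The main obstacle is making the reduction to $O(n^d)$ candidate hyperplanes rigorous, in particular handling the perturbation/genericity argument correctly: I must show that a score-maximizing half-space can always be replaced by one whose boundary is determined by a small set of agents, and I must treat the degenerate cases where fewer than $d$ agents lie on the boundary or where the status-quo constraint (the hyperplane must correspond to a \emph{valid} proposal, namely the perpendicular bisector of some $p \ne r$) interacts with the enumeration. A clean way to handle the latter is to note that from any open half-space $H$ not containing the origin on its boundary in a degenerate way, we can always construct a proposal $p$ whose bisecting hyperplane is parallel to and arbitrarily close to $\partial H$, so that the set of agents strictly inside is preserved; the slight care needed is to ensure strictness is not lost in the limit, which is why we enumerate boundary-anchoring subsets and then shift the hyperplane by an infinitesimal amount toward the side containing the agents we wish to keep.
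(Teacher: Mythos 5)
Your proof is essentially correct but takes a genuinely different route from the paper's. You use the classical computational-geometry argument: perturb an optimal half-space into a canonical position anchored on at most $d$ agents, enumerate the $O(n^d)$ anchoring subsets, and count agents per candidate. The paper instead argues abstractly: half-spaces in $\bR^d$ have VC-dimension $d+1$, so by the Sauer--Shelah bound the family $\cS$ of agent subsets realizable by a proposal has size $O(n^{d+1})$ and can be enumerated by the standard inductive construction (Lemma~3.1/3.2 of Kearns--Vazirani); for each candidate subset $S$ it then recovers a witnessing proposal by solving the strict-feasibility LP from Proposition~\ref{prop:euc-score-n}. The trade-off is instructive: your route gives a slightly sharper candidate count and a more concrete algorithm, but it forces you to confront exactly the issues you flag as ``the main obstacle''---when $|T|<d$ the hyperplane through $T$ is not unique, the perturbation must not cross the origin (since a valid proposal requires the open half-space $\langle v,p\rangle>\tfrac12\|p\|^2$ with positive offset), and strictness must survive the limiting argument. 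These are all surmountable by standard means (e.g., admitting the origin as an additional anchor and shifting by a sufficiently small rational $\epsilon$), but your write-up leaves them as acknowledged gaps rather than resolving them. The paper's VC-dimension-plus-LP route sidesteps all of this: enumeration happens at the level of realizable dichotomies rather than geometric hyperplane positions, and the LP (with the $p=\epsilon x/\|x\|$ trick) uniformly handles both strictness and the status-quo constraint when producing the witnessing proposal. If you want to keep your approach, the cleanest repair is to replace the geometric reconstruction of the half-space from $T$ by the same LP-based recovery: for each of your $O(n^d)$ candidate dichotomies, test realizability and extract a proposal via linear programming.
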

\begin{proof}
As we observed earlier, 
a proposal divides $\bR^d$ into two half-spaces. The set of half-spaces in $\bR^d$ has a VC-dimension of $d+1$~\cite{kearns1994introduction}, and therefore, the set of proposals also has a VC-dimension of at most $d+1$. Given a set $\cV$ of $n$ agents, 
let $\cS$ be the following set of subsets of agents:
\begin{multline*}
    \cS = \{ S \subseteq \cV \mid \exists x \in \cX \text{ s.t. $|| x ||^2 < 2 \langle v , x \rangle, \forall v \in S$,} \\
    \text{and $|| x ||^2 
    \ge 2 \langle v , x \rangle, \forall v \notin S$} \}.
\end{multline*}
In other words, for every set $S \in \cS$, there exists a proposal that is supported by all the agents in $S$ and none of the agents not in $S$. As the set of proposals has a VC-dimension of $d+1$, the set $\cS$ is of size $O(n^{d+1})$ and the elements of $\cS$ can be enumerated in time polynomial in $n$ when $d$ is fixed.\footnote{See Lemma~3.1 and 3.2 of Chapter 3 of the book by \citet{kearns1994introduction}}. The elements of $\cS$ can be computed inductively, see the proof of Lemma~3.1 in~\citet{kearns1994introduction}. We are also using the fact that given an arbitrary set of agents $S$, we can efficiently compute a proposal that is supported by exactly the agents in $S$ by solving an LP, as shown in Proposition~\ref{prop:euc-score-n}. So, we can find the largest $S$ in $\cS$ and the corresponding proposal efficiently.
\end{proof}

To conclude this section, we consider the complexity of $2$-{\sc Compromise}
in $d$-Euclidean deliberation spaces. Recall that \citet{elkind2021arxiv}
prove (Theorem 3) that a sequence of $2$-compromises is guaranteed to converge
to a successful deliberative coalition structure. Their argument proceeds
by showing that whenever a deliberative coalition structure is not successful, 
then one of the following conditions holds: (i) two existing coalitions can merge; (ii) there exists an agent that can join a maximum-size coalition (the new coalition may need to choose a proposal that differs from the proposal originally supported by the maximum-size coalition), or (iii) the deliberative coalition structure consists of two coalitions (and hence there is a $2$-compromise transition).
Note that the transitions in (ii) and (iii) increase the size of the largest coalition 
and (i) does not decrease it, whereas (i) decreases the number of coalitions.
Consequently, one can reach a successful outcome in $O(n^2)$ steps
by verifying conditions (i)--(iii) and performing the respective transition
when one of them holds. Now, one can efficiently verify whether condition (i) or (ii) holds, and compute the outcome of the respective transition; if this was also
true for (iii), we would be able to solve \textit{Euc}-{\sc Score} in polynomial time, 
in contradiction to Theorem~\ref{thm:euclidNPHard} (assuming P$\neq$NP). We obtain the following corollary.

\begin{corollary}
For each $k\ge 2$, 
there is no polynomial-time algorithm 
for \textit{Euc}-$k$-{\sc Compromise} unless P$\neq$NP.
\end{corollary}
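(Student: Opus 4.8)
The plan is to argue by contraposition: I assume a polynomial-time algorithm $A$ for \textit{Euc}-$k$-{\sc Compromise} for some fixed $k\ge 2$ and use it to solve \textit{Euc}-{\sc Score} in polynomial time, which is impossible unless $\mathrm{P}=\mathrm{NP}$ by Theorem~\ref{thm:euclidNPHard}. The engine of the reduction is the convergent deliberation of \citet{elkind2021arxiv}: I run their $O(n^2)$-step process driven by conditions (i)--(iii), handling the two cheap conditions directly and calling $A$ only for the remaining one.

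Given an instance of \textit{Euc}-{\sc Score}, I would start from the singleton coalition structure $\{(\{v_i\},v_i)\}_{i\in[n]}$, which is well defined because every agent approves at least one proposal and hence satisfies $v_i\neq r$, so $v_i$ approves itself. At each step I first test condition (i) (two coalitions can merge) over all pairs and condition (ii) (some agent can join a maximum-size coalition) over all agents; each such test is a \textit{Euc}-{\sc Perfect Score} query on a set of the form $C_a\cup C_b$ or $C^*\cup\{v\}$ and is therefore polynomial by Proposition~\ref{prop:euc-score-n}, and I perform the corresponding transition whenever it is available. If both conditions fail, I appeal to the contrapositive of the convergence lemma: the structure is successful unless condition (iii) holds, i.e.\ unless it consists of exactly two coalitions, in which case I call $A$ to obtain the required transition (and halt if it reports none). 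Because the size of the largest coalition never decreases and strictly increases under (ii) and (iii) while (i) strictly decreases the number of coalitions, the process halts after $O(n^2)$ iterations at a successful structure, whose largest coalition has size equal to the maximum score; comparing this value to $\eta$ decides \textit{Euc}-{\sc Score}.

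The reason this works uniformly for every $k\ge 2$ is that $A$ is only ever invoked on a structure with exactly two coalitions. On such a structure any $k$-compromise can involve at most $\min(k,2)=2$ of them, and a compromise involving a single coalition $C_j$ is impossible since it would require $t>|C_j|$ members of $C_j$ to approve the new proposal; hence every transition $A$ can legitimately return is a $2$-compromise, which is precisely what condition (iii) needs.

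I expect the crux to be the bookkeeping rather than the complexity-theoretic core. The subtle point is that one cannot test ``is the current structure successful?'' directly without already knowing the maximum score, so I would rely throughout on the contrapositive of Elkind et al.'s lemma to certify success implicitly: once (i) and (ii) fail, the structure is successful exactly when it is not a two-coalition structure, and in the two-coalition case a single call to $A$ both detects success (it returns nothing) and otherwise advances the deliberation. This is also what confines the potentially expensive calls to $A$ to two-coalition structures and thereby sidesteps the exponentially long $2$-compromise sequences exhibited elsewhere in the paper; a blind sequence of calls to $A$ would give no polynomial bound on the number of steps. The remaining work is the routine verification that (i) and (ii) reduce to \textit{Euc}-{\sc Perfect Score} and hence inherit its polynomial bound from Proposition~\ref{prop:euc-score-n}.
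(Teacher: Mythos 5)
Your proposal is correct and follows essentially the same route as the paper: it uses a hypothetical polynomial-time algorithm for \textit{Euc}-$k$-{\sc Compromise} to implement step (iii) of the $O(n^2)$-step convergent process of \citet{elkind2021arxiv}, handling conditions (i) and (ii) via the linear-programming test of Proposition~\ref{prop:euc-score-n}, and thereby solves \textit{Euc}-{\sc Score} in polynomial time, contradicting Theorem~\ref{thm:euclidNPHard}. Your added observation that on a two-coalition structure any $k$-compromise is necessarily a $2$-compromise (so the argument is uniform in $k\ge 2$) is a detail the paper leaves implicit, but the argument is the same.
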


\section{Number of Transitions}\label{sec:length}
In this section, we go back to viewing deliberation as a decentralized process, 
and ask whether all sequences of $k$-compromise transitions terminate after a number of steps that is polynomial in $n$. Recall that, 
in Euclidean spaces, if the agents are told to perform $2$-compromise transitions in a specific easy-to-compute order, then deliberation terminates in $n^2+1$ steps~\cite{elkind2021arxiv}  (see also the exposition at the end of Section~\ref{sec:complex}).
However, if the agents can choose the order of transitions arbitrarily, 
the only known upper bound (which applies to all deliberation spaces)
is $n^n$, proved using a lexicographic potential function. 
In the next theorem, we put forward a different potential function, 
which enables us to improve the upper bound to $2^n$. 

\begin{theorem}\label{thm:compromiseUpperBd}
The number of transitions in a $2$-deliberation is at most $2^n$. 
This can be shown using the following potential function: \begin{equation}\label{eq:potential2Compromise}
    \phi(\bD) = - |\bD| + \sum_{(C,x) \in \bD} 2^{|C|}.
\end{equation}
\end{theorem}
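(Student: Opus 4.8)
The plan is to show that the potential function $\phi$ defined in \eqref{eq:potential2Compromise} strictly increases by at least $1$ at every $2$-compromise transition, and that $\phi$ is bounded above by $2^n$ over all deliberative coalition structures; since $\phi$ takes integer values (each $2^{|C|}$ and $|\bD|$ are integers), the number of transitions is then at most the total range of $\phi$, giving the bound $2^n$.

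First I would establish the upper bound on $\phi$. For any deliberative coalition structure $\bD = \{(C_1,x_1),\dots,(C_m,x_m)\}$, the sets $C_1,\dots,C_m$ partition $\cV$, so $\sum_j |C_j| = n$ and $m = |\bD|$. Thus $\phi(\bD) = -m + \sum_{j=1}^m 2^{|C_j|}$. Since each $|C_j|\ge 1$, the term $2^{|C_j|} - 1 \ge |C_j|$ is not by itself enough, so instead I would note that $\sum_j 2^{|C_j|}$ is maximized relative to $\sum_j |C_j| = n$ when the agents are concentrated, giving $\sum_j 2^{|C_j|} \le 2^n$ (with $m\ge 1$ subtracted off). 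A clean way to see the bound $\phi(\bD)\le 2^n$ is that $\sum_j 2^{|C_j|} \le 2^{\sum_j |C_j|} = 2^n$ whenever the $|C_j|$ are positive integers summing to $n$ (superadditivity of $x\mapsto 2^x$ under merging), and subtracting $|\bD|\ge 0$ only decreases $\phi$. A matching lower bound (e.g. the all-singletons structure gives $\phi = -n + 2n = n$) is not needed for the upper bound on the count, though it confirms the range is $O(2^n)$.

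The main step is to verify that $\phi$ strictly increases at each transition. Consider a $2$-compromise transition $(\bD,\bD')$ involving coalitions $(C_1,x_1)$ and possibly $(C_2,x_2)$ (the case $\ell=1$ and $\ell=2$). In the $\ell=2$ case, $\bD'$ replaces $(C_1,x_1),(C_2,x_2)$ by the new coalition $(C,x)$ with $|C| > |C_1|$ and $|C| > |C_2|$, plus possibly the remnants $(C_1\setminus C, x_1)$ and $(C_2\setminus C, x_2)$. The contribution to $\sum 2^{|C|}$ changes from $2^{|C_1|} + 2^{|C_2|}$ to $2^{|C|}$ plus $2^{|C_1\setminus C|}$ and $2^{|C_2\setminus C|}$ (where an empty remnant contributes $0$, not $2^0=1$, because it is not added as a coalition). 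The hard part will be bookkeeping the change in $|\bD|$ together with the exponential terms across the four sub-cases (both remnants empty, one empty, or neither empty), and checking in each that the increase in $\sum 2^{|C|}$ dominates the at-most-$1$ change in $-|\bD|$. The crucial inequality is that since $|C|>\max(|C_1|,|C_2|)$, we have $2^{|C|} \ge 2^{|C_1|}+2^{|C_2|}$ when $|C|\ge |C_1|+1$ and $|C|\ge |C_2|+1$ do not directly give this, so I would instead argue via the agents transferred: writing the parts precisely and using $2^{a+b}\ge 2^a+2^b$ for $a,b\ge 1$ on the merged-versus-split terms. The $\ell=1$ case is similar but simpler, as one coalition grows by absorbing agents that were not previously in it—though under the definition a $1$-compromise can only reshuffle within a single coalition, so I would check whether $\ell=1$ is vacuous or reduces to growth of a single coalition, handling it accordingly. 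Once each sub-case yields $\phi(\bD') \ge \phi(\bD) + 1$, combining with $\phi\le 2^n$ and $\phi\ge$ some constant completes the argument.
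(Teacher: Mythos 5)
Your plan is essentially the paper's proof: the same potential, the same claim that every $2$-compromise transition raises $\phi$ by at least $1$, and the same upper bound on $\phi$; the paper merely packages your four sub-cases into one line by observing that a remnant of size $d$ contributes exactly $2^d-1\ge 0$ to $\Delta\phi$ whether or not it is empty (since $2^0-1=0$). One correction at the step you single out as crucial: writing $a\le b<c$ for the sizes of the two old coalitions and the new one, the inequality $2^c\ge 2^a+2^b$ \emph{does} follow directly from $c\ge b+1$, namely $2^c\ge 2\cdot 2^b\ge 2^a+2^b$, whereas the detour you propose via $2^{a+b}\ge 2^a+2^b$ cannot work, because the new coalition may shed agents and so $c\le a+b$ puts that inequality on the wrong side. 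With the direct inequality all your sub-cases close (the exponential terms gain at least $0$, each non-empty remnant adds $2^{|\cdot|}-1\ge 1$, which covers the at-most-$1$ loss from $-|\bD|$), and combining $\phi\le 2^n-1$ with $\phi\ge n$ gives the stated bound; also, as you suspected, the $\ell=1$ case is vacuous since $C\subseteq C_1$ and $|C|>|C_1|$ cannot both hold.
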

\begin{proof} 
Consider a deliberative coalition structure $\bD$. 
Note that $\bigcup_{(C,x) \in \bD}C = \cV$.
From now on, to simplify notation, 
we will identify a deliberative coalition $(C ,x)$ with its set of agents 
$C$, i.e., we will speak of a coalition $C$ in $\bD$.

Suppose a $2$-compromise transition occurs, where two coalitions $A$ and $B$ of sizes $a$ and $b$ in $\bD$ compromise to form coalitions $C$, $D$, and $E$ of sizes $c$, $d$ and $e$ in $\bD'$, where $D\subsetneq A$, $E\subsetneq B$. We can assume without loss of generality that $a \le b < c$. Note that $D$ and $E$ may be empty, in which case they do not appear in the new coalition structure $\bD'$. 
We have $a + b = c + d + e$, 
and the value of $|\bD'| - |\bD|$ may be either $-1$, $0$, or $1$, depending upon whether $D$ and/or $E$ are empty. 
The change in potential is 
\[
    \phi(\bD') - \phi(\bD) = 2^c - 2^a - 2^b + 1 + (2^d - 1) + (2^e - 1).
\]
Indeed, if $D$ is non-empty then it contributes $2^d$ to $\sum_{C \in \bD'} 2^{|C|}$ and $-1$ to the $-|\bD'|$ portion of $\phi(\bD')$. On the other hand, if $D$ is empty, then it makes neither of these contributions to $\phi(\bD')$, but also $2^d-1=0$. The same argument applies to $E$. As $2^d - 1$ and $2^e - 1$ are always non-negative, we obtain
\[ 
    \phi(\bD') - \phi(\bD) \ge 2^c - 2^a - 2^b + 1 \ge 2\cdot 2^b - 2^b - 2^b + 1 = 1, 
\]
as $c > b \ge a$. Hence, any $2$-compromise transition increases the potential by at least $1$. On the other hand, we have $n \le \phi(\bD) \le 2^n - 1$ for any $\bD$.

\end{proof}
Observe that Theorem~\ref{thm:compromiseUpperBd} is independent of the deliberation space and is a property of $2$-compromise transitions. A similar result can be proven for $k$-compromise transitions: they converge in at most $k^n$ steps.

We now focus on proving a lower bound on the convergence of $2$-compromise transitions. 
We first prove a lemma that applies to any deliberation space that satisfies a certain property. We then use this lemma to construct a family of examples
for hypercube deliberation spaces, and for Euclidean deliberation spaces.

\begin{lemma}\label{thm:compromiseSlowConverg}
Suppose a deliberation space with the set of proposals $\cX$ and the set of agents $\cV$ satisfies the following property: for every $C \subseteq \cV$ there exists a proposal $p \in \cX$ s.t. all agents in $C$ approve $p$ and none of the agents not in $C$ approve $p$. Then, a deliberation may take $\Omega(2^{\sqrt{n}/2})$ $2$-compromise transitions.
\end{lemma}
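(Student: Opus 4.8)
The plan is to construct, for an arbitrary parameter $m$, a system of $m$ ``independent'' gadgets, each of which forces a coalition of size at most some small bound to roughly double through a sequence of $2$-compromise transitions, and to interleave these gadgets so that the total number of transitions multiplies. Since the hypothesis on $\cX$ lets me realize \emph{any} subset $C\subseteq\cV$ as exactly the approving set of some proposal, I am free to design transitions purely combinatorially: I only need to exhibit a starting coalition structure $\bD_0$ on $\cV$ and a sequence of valid $2$-compromises, where validity means each step merges (parts of) two coalitions $A,B$ into a strictly larger coalition $C$, and I can always pick a proposal witnessing that exact approval set. So the entire argument reduces to building a long legal sequence of set-theoretic $2$-compromise moves.

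The key quantitative idea is suggested by Theorem~\ref{thm:compromiseUpperBd}: the potential $\sum_{(C,x)\in\bD}2^{|C|}$ increases by at least $1$ each step, and a single coalition of size $s$ contributes $2^s$. So to get $\Omega(2^{\sqrt n/2})$ steps I want the agents to be partitioned into groups on which the potential climbs toward $2^{\Theta(\sqrt n)}$. First I would set $k\approx\sqrt n$ and split the $n$ agents into $k$ blocks of size $k$ (so $n=k^2$). Within a block, I want a sequence of $2$-compromises whose length is exponential in the block size; the natural mechanism is that at each step the current largest coalition in the block absorbs one more agent, and after $k$ such absorptions the coalition has grown by one, but the ``leftover'' singletons created along the way can be recombined to repeat the process. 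The construction should be arranged so that each block independently undergoes $\Omega(2^{k/2})$ transitions and the blocks can be run sequentially, yielding $k\cdot\Omega(2^{k/2})=\Omega(2^{\sqrt n/2})$ transitions overall (the polynomial factor $k$ being absorbed into the $\Omega$).

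Concretely, inside one block I would start from the all-singletons structure and repeatedly apply the rule ``take the current biggest coalition $B$ and some other coalition $A$ with $|A|\le|B|$, and form $C$ with $|C|=|B|+1$, leaving a remainder.'' The point I would exploit is that a $2$-compromise need only require $|C|>|A|$ and $|C|>|B|$, and it is allowed to shed dissenting agents into leftover coalitions; by carefully choosing at each step which single new agent joins and which agents are shed, I can force the number of coalitions to oscillate so that the process must revisit many configurations before stabilizing. I would track this with a potential-style counting argument (counting the number of distinct coalition structures the block passes through) to certify that the number of steps is at least $2^{\Omega(k)}$. The honest way to present the lower bound is to give an explicit recursive schedule $T(k)$ satisfying a recurrence like $T(k)\ge 2\,T(k-1)$ or $T(k)\ge T(k-1)+2^{\Omega(k)}$, which unrolls to $2^{\Omega(k)}$.

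The main obstacle I expect is the combinatorial design of the per-block schedule: I must make each of the exponentially many moves a \emph{legal} $2$-compromise (strictly larger resulting coalition, correct treatment of the shed agents), never create a coalition larger than the block size, and guarantee the transitions can be genuinely ordered in a single sequence rather than merely counted abstractly. Getting the bookkeeping right so that leftover singleton coalitions are available to be reused in later merges, while the size invariants $|C|>|A|,|C|>|B|$ hold at every step, is the delicate part; once that schedule is specified and shown to have length $2^{\Omega(k)}$, substituting $k=\sqrt n$ and summing over the $k$ independent blocks gives $\Omega(2^{\sqrt n/2})$ immediately, and the hypothesis on $\cX$ discharges all the proposal-existence obligations for free.
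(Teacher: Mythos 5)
There is a genuine gap: you have correctly identified the framework (the hypothesis on $\cX$ lets you work purely with sets, and the potential $\phi(\bD)=-|\bD|+\sum_{(C,x)\in\bD}2^{|C|}$ is the right quantitative tool, used ``in reverse'' for a lower bound, exactly as the paper does), but the entire content of the lemma is the explicit legal schedule, and you never construct it. The mechanism you sketch --- the largest coalition absorbs one agent per step, shedding singletons --- terminates after $O(k)$ steps per block and does not produce an exponential sequence; the recurrences $T(k)\ge 2T(k-1)$ and the ``count distinct configurations visited'' argument are asserted, not derived, and counting reachable configurations is not by itself a lower bound on the length of any one legal sequence. The missing idea is the specific transition shape the paper uses: merge two \emph{equal-size} coalitions of size $a$ into one of size $a+1$ while shedding the remaining $a-1$ agents into two coalitions of size roughly $(a-1)/2$. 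This caps each potential increment at $\tfrac{3}{2}\cdot 2^{a/2}$, and a pigeonhole argument (distinct sizes $1,\dots,\sqrt n$ sum to only about $n/2$) guarantees an equal-size pair always exists while the maximum coalition size is at most $\sqrt n$. Hence the potential must climb from $n$ to about $2^{\sqrt n}$ in increments of at most $\tfrac{3}{2}\cdot 2^{\sqrt n/2}$, forcing $\Omega(2^{\sqrt n/2})$ steps --- no block decomposition is needed.

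Your block decomposition also has a structural problem. You ask each block of $k=\sqrt n$ agents to sustain $\Omega(2^{k/2})$ transitions, i.e., a number of transitions exponential in the \emph{number of agents in the block}. That is a strictly stronger claim than the lemma itself (which, applied to $k$ agents, only gives $\Omega(2^{\sqrt k/2})$); if you could prove it, you would apply it directly to all $n$ agents and obtain $2^{\Omega(n)}$, making the blocks pointless. Conversely, if each block only achieves what the lemma guarantees, summing over $\sqrt n$ blocks of size $\sqrt n$ yields only $\sqrt n\cdot 2^{\Omega(n^{1/4})}$, far short of the target. The promised ``multiplication'' of transition counts across interleaved blocks would require resetting a block's progress, which $2$-compromise transitions (whose new coalition must strictly exceed both participants) do not permit. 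So the decomposition neither simplifies the problem nor composes to the claimed bound.
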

\begin{proof}
Fix a coalition structure $\bD$.
The property of the deliberation space formulated in the theorem statement implies that
for every pair of deliberative coalitions $(A, x), (B, y)\in\bD$ and every $C\subseteq A\cup B$ we can find a proposal $z$ approved by agents in $C$ (and no other agents).
Thus, in what follows, we can reason in terms of sets of agents rather than deliberative coalitions. Consequently, to simplify notation, 
we will write $C\in\bD$ instead of $(C, x)\in \bD$.

We now prove that if the agents end up executing the following types of $2$-compromise transitions, then they will take an exponential time to converge, starting from the coalition structure where all the agents are in singleton coalitions.
\begin{enumerate}
    \item \textbf{Type 1.} If there is a pair $C,C' \in \bD$ such that $|C| = |C'|$, then make the following transition (expressed in terms of coalition sizes):
    \begin{equation*} 
        (a) + (a) \longrightarrow (a+1) + \left\lfloor\frac{a-1}{2}\right\rfloor + \left\lceil\frac{a-1}{2}\right\rceil,
    \end{equation*}
    where $a = |C| = |C'|$. The change in the potential function (as defined in \eqref{eq:potential2Compromise}) for such a transition is
    \begin{multline}\label{eq:potentialChangeType1}
        \Delta \phi \le 2^{a+1} + 2^{\floor{\frac{a-1}{2}}} + 2^{\ceil{\frac{a-1}{2}}} - 2 \cdot 2^{a} \\
        = 2^{\floor{\frac{a-1}{2}}} + 2^{\ceil{\frac{a-1}{2}}} \le \frac{3}{2} \cdot 2^{a/2}. 
    \end{multline}
    (Note that $\floor{\frac{a-1}{2}}$ and $\ceil{\frac{a-1}{2}}$ may be zero, which implies that the number of coalitions decreases, and therefore, may contribute $1$ to $\Delta \phi$ due to change in $|\bD|$. 
    But, in that case, $2^{\floor{\frac{a-1}{2}}}$ would not be in $\Delta \phi$; compensates.)
    \item \textbf{Type 2.} If there are no Type 1 transitions available, then select two smallest coalitions $C,C' \in \bD$ and make the following transition:
    \begin{equation}\label{eq:move2}
        (a) + (b) \longrightarrow (b+1) + \left\lfloor\frac{a-1}{2}\right\rfloor + \left\lceil\frac{a-1}{2}\right\rceil,
    \end{equation}
    where $a = |C|$, $b = |C'|$ and $a \le b$. 
\end{enumerate}

Now, if $\max_{C \in \bD} |C| \le \sqrt{n}$, then there must be at least one pair of coalitions of the same size. If not, then 
\[ \sum_{i=1}^{\sqrt{n}} i = \sqrt{n} (\sqrt{n}+1)/2 = n/2 + \sqrt{n}/2 < n, \]
for large enough $n$, we get a contradiction. So, only Type 1 transitions are made until $\max_{C \in \bD} |C|$ exceeds $\sqrt{n}$.

From \eqref{eq:potentialChangeType1}, until $\max_{C \in \bD} |C| \le \sqrt{n}$ holds, we know that the change in potential for Type 1 transitions is bounded by
\[ \Delta \phi \le \frac{3}{2} \cdot 2^{a/2} \le \frac{3}{2}\cdot 2^{\sqrt{n}/2}.\]
We also know that if $\max_{C \in \bD} |C|$ goes above $\sqrt{n}$ then the value of $\phi$ must exceed $2^{\sqrt{n}} - n$, and the initial value of $\phi$ is $n$. Hence, the number of transitions must be at least
\[ \frac{2}{3}\cdot \frac{2^{\sqrt{n}} - 2n}{2^{\sqrt{n}/2}} = \frac{2}{3}\cdot \left(2^{\sqrt{n}/2} - \frac{2n}{2^{\sqrt{n}/2}}\right). \]
\end{proof}


\subsection{Hypercube Deliberation Spaces}
In the next theorem, we apply Lemma~\ref{thm:compromiseSlowConverg} to hypercube deliberation spaces by constructing a family of the deliberation spaces that satisfies the required conditions for the lemma.
\begin{theorem}\label{thm:hypslow}
There exists a family of hypercube deliberation spaces where a $2$-deliberation may take $\Omega(2^{\sqrt{n}/2})$ $2$-compromise transitions.
\end{theorem}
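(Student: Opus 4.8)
The plan is to invoke Lemma~\ref{thm:compromiseSlowConverg}: it suffices to exhibit, for every $n$, a hypercube deliberation space whose $n$ agents have the \emph{shattering} property required by the lemma, namely that every $C\subseteq\cV$ is the approval set of some proposal. Since the lemma's $\Omega(2^{\sqrt n/2})$ bound is stated purely in terms of $n=|\cV|$, the number of dimensions we use is irrelevant to the final bound, so I am free to pick whatever $d$ makes the shattering easy to arrange.

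First I would rewrite the approval condition in a convenient algebraic form. Since $r=\mathbf 0$ and $\rho$ is the Hamming distance, an agent $v$ approves $p$ iff $\|v\|_1-\rho(v,p)>0$; expanding coordinatewise gives $\|v\|_1-\rho(v,p)=2\langle v,p\rangle-\|p\|_1$, so $v$ approves $p$ exactly when $\langle 2v-\mathbf 1,\,p\rangle>0$. Writing $w_i=2v_i-\mathbf 1\in\{-1,+1\}^d$, the task reduces to choosing sign vectors $w_1,\dots,w_n$ so that, for every target $C$, some $p\in\{0,1\}^d$ makes $\langle w_i,p\rangle>0$ for $i\in C$ and $\langle w_i,p\rangle\le 0$ for $i\notin C$.

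The construction I would use sets $d=2n$ and groups the coordinates into $n$ pairs, one per agent. In pair $i$ I put $(+1,+1)$ into $w_i$ and $(+1,-1)$ into every $w_j$ with $j\neq i$ (equivalently, $v_i$ equals $(1,1)$ on its own pair and $(1,0)$ on all the others). The point of the pairing is that turning on \emph{both} coordinates of a pair is neutral for every agent except the one that owns it: the owner gains $+2$ while each other agent gains $+1-1=0$. Hence, given a target $C$, I take $p$ to be $(1,1)$ on the pairs owned by agents of $C$ and $(0,0)$ elsewhere; a routine count then gives $\langle w_j,p\rangle=2$ if $j\in C$ and $0$ otherwise, so the approval set is exactly $C$. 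For $C\neq\varnothing$ this $p$ differs from $r$, as required; the degenerate case $C=\varnothing$ can be realized for $n\ge 2$ by switching on only the second coordinate of every pair, which gives each agent score $2-n\le 0$, although in fact only nonempty coalitions ever arise in the lemma's deliberation.

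The only real content is the gadget of the previous paragraph. The majority flavour of the hypercube rule, $\langle v,p\rangle>\|p\|_1/2$, rules out the naive idea of giving each selected agent a private coordinate, because a single $\{0,1\}$-coordinate contributes to \emph{every} agent's inner product at once. The main obstacle is therefore to make a block of the proposal influence one chosen agent while remaining net-neutral for all the others, and this is exactly what the $(+1,+1)$-versus-$(+1,-1)$ pairing accomplishes. Once the shattering property is verified, Lemma~\ref{thm:compromiseSlowConverg} applies directly and yields a $2$-deliberation of length $\Omega(2^{\sqrt n/2})$ on this family, completing the proof.
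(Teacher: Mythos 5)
Your proof is correct and follows the same route as the paper: both verify the shattering hypothesis of Lemma~\ref{thm:compromiseSlowConverg} with an explicit family in $d=2n$ dimensions and then invoke the lemma. Only the gadget differs --- the paper gives each agent $0$s on the first half of the coordinates and all-but-one $1$s on the second half, whereas you use per-agent coordinate pairs with the $(+1,+1)$-versus-$(+1,-1)$ neutrality trick --- and your version has the minor advantage of explicitly realizing the extreme cases $C=\cV$ and $C=\varnothing$ as well.
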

\begin{proof}
For every even $d\ge 2$, we will construct an instance with $n = d/2$ agents. All these agents have $0$s in their first $d/2$ dimensions and $1$s in all but one (which is different for each agent) of the last $d/2$ dimensions. 
In particular, the $i$-th agent $v_i = (v_{i,j})_{j \in [d]}$ has
\[
    v_{i,j} = \begin{cases} 0, &\text{if $j \le d/2$ or $j-1-d/2 = i$} \\
    1, &\text{otherwise} \end{cases}.
\]

Consider a set $S$ of $m < d/2$ agents. They agree on $d/2 - m$ 1s. Now consider a proposal that has $d/2 - m - 1$ 1s somewhere in the first $d/2$ dimensions and $d/2 - m$ 1s in the dimensions where the agents in $S$ agree on, and 0s everywhere else. Each agent in $S$ is at a distance of $(d/2-m-1)+ ((d/2-1)-(d/2-m)) = d/2-2$ from this proposal, and at a distance of $d/2-1$ from the origin, so they approve the proposal. However, an agent not in $S$ is at a distance of at least $(d/2-m-1) + ((d/2-1)-(d/2-m-2)) = d/2$ from the proposal, so they do not approve the proposal.

So, for any subset of agents $S$, there is a proposal that is approved exactly by the agents in $S$. Applying Lemma~\ref{thm:compromiseSlowConverg}, we get the desired result.
\end{proof}


\subsection{Euclidean Deliberation Spaces}
As we did for hypercube deliberation spaces, we construct a family of Euclidean deliberation spaces that satisfies the required conditions for Lemma~\ref{thm:compromiseSlowConverg}.

\begin{theorem}\label{thm:euclidSlowConverg}
There exists a family of Euclidean deliberation spaces where a $2$-deliberation may take $\Omega(2^{\sqrt{n}/2})$ $2$-compromise transitions.
\end{theorem}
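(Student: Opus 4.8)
The plan is to exhibit, for each $n$, a Euclidean deliberation space on $n$ agents that satisfies the hypothesis of Lemma~\ref{thm:compromiseSlowConverg}, and then invoke that lemma directly. Recall from~\eqref{eq:prf:euclidFPT} that an agent $v$ approves a proposal $p$ precisely when $\langle v,p\rangle > \tfrac12\|p\|^2$, so each proposal induces an open half-space whose bounding hyperplane avoids the status quo $r=0$ (the origin always lies on the non-approving side). The property required by the lemma---that every subset of agents is the exact approval set of some proposal---therefore amounts to asking that the agents can be cut off, in all possible ways, by origin-avoiding half-spaces. Unlike the hypercube, where the discreteness of the Hamming metric forced a delicate dimension count, Euclidean geometry makes arbitrary linear separations available, so the only real task is to pin down a placement of the agents for which the separating proposal can be named explicitly.

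Concretely, I would place the $n$ agents at the standard basis vectors of $\bR^d$ with $d=n$; that is, $v_i = e_i$ for $i\in[n]$, with $r$ the origin as required. For a non-empty subset $S\subseteq\cV$ I would take the proposal to be the centroid $p_S = \tfrac{1}{|S|}\sum_{i\in S} e_i$. The verification is then a one-line computation: $\|p_S\|^2 = 1/|S|$, while $\langle e_j, p_S\rangle$ equals $1/|S|$ when $j\in S$ and $0$ otherwise, so the approval condition $\langle e_j,p_S\rangle > \tfrac12\|p_S\|^2$ holds if and only if $j\in S$. Hence $p_S$ is approved by exactly the agents in $S$. This establishes the hypothesis of Lemma~\ref{thm:compromiseSlowConverg} for every non-empty $S$, which is all the lemma requires, since its transitions start from singletons and only ever form non-empty coalitions; the space is also a valid deliberation space because each agent $e_i$ approves $p_{\{i\}}=e_i$. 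Applying Lemma~\ref{thm:compromiseSlowConverg} yields a $2$-deliberation of length $\Omega(2^{\sqrt n/2})$, which is the claim.

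I do not anticipate a serious obstacle here: the statement is essentially a corollary of the lemma once the construction is fixed. The single point peculiar to this model that must be checked is that the separating half-space excludes the origin---equivalently, that the proposal corresponds to a genuine half-space rather than the status quo---and this is automatic, because the centroid construction uses the strictly positive threshold $\tfrac12\|p_S\|^2>0$. The only thing I would take care to state explicitly is that the empty subset never needs to be separated (coalitions are non-empty throughout), so the division by $|S|$ defining $p_S$ is always well-defined.
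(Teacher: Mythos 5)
Your construction is exactly the paper's: agents at the standard basis vectors of $\bR^n$ and, for each non-empty $S$, the centroid $\tfrac{1}{|S|}\sum_{i\in S}e_i$ as the separating proposal, followed by an appeal to Lemma~\ref{thm:compromiseSlowConverg}. The only cosmetic difference is that you verify approval via the inner-product criterion~\eqref{eq:prf:euclidFPT} while the paper computes squared distances directly; the argument is correct and essentially identical.
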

\begin{proof} 
For each $d\ge 2$, we will construct an instance with $d$ agents. Let the agent $v_i \in \cV$ be positioned on the $i$-th axis at a distance of $1$ from the origin, i.e., 
agent $v_1$ is located at $(1,0, \ldots, 0)$, agent $v_2$ at $(0,1,0,\ldots,0)$, and so on. 

For every $S \subseteq \cV$, let the point $x^S$ be defined as
\[
x^S_i = \begin{cases} 1/|S|, & \text{ if $v_i \in S$} \\
0, & \text{ otherwise}
\end{cases}.
\]
Observe that the distance of agent $v_i \in S$ from $x^S$ is $\rho(v_i,x^S) = (1-1/|S|)^2 + (1/|S|)^2(|S|-1) = 1 - 1/|S| < 1$, so agent $v_i$ prefers $x^S$ to the status quo. But, for an agent $v_i \notin S$, the distance of $v_i$ from $x^S$ is $1 + |S| (1/|S|)^2 = 1 + 1/|S| > 1$. So, for any subset of agents $S$, there is a proposal $x^S$ that is supported exactly by the agents in $S$. Applying Lemma~\ref{thm:compromiseSlowConverg}, we obtain the desired result.
\end{proof}

\section{Beyond Two-Way Compromises}\label{sec:beyond}
\citet{elkind2021arxiv} showed that, while in Euclidean deliberation spaces $2$-com\-pro\-mise transitions guarantee successful deliberation, 
in hypercube deliberation spaces this is not the case. Specifically, they proved the following result:
\begin{theorem}\cite{elkind2021arxiv}\label{thm:kCompromiseOld}
There are $d$-hypercube deliberation spaces where $d$-compromise transitions are necessary for a successful deliberation; on the other hand, in every $d$-hypercube deliberation space, $(2^{d-1} + (d+1)/2)$-compromise transitions are sufficient for a successful deliberation.
\end{theorem}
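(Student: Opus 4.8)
The plan is to treat the two directions separately, using one shared observation: in a hypercube deliberation space an agent's approval set depends only on the vertex it occupies, and a vertex with support $A$ approves a proposal with support $B$ exactly when $|A\cap B|>|B|/2$, i.e.\ when a strict majority of the ones of the proposal are also ones of the agent. The single most useful consequence is a counting lemma: \emph{every} proposal is approved by at most $2^{d-1}$ of the $2^d$ vertices. I would prove this with the fixed-point-free involution $v\mapsto v\oplus\mathbf{1}_B$ that flips exactly the coordinates in the support $B$ of the proposal; it sends a vertex with $a$ ones inside $B$ to one with $|B|-a$ ones inside $B$, so at most one vertex of each pair can satisfy $a>|B|/2$, and since a proposal differs from $r$ its support $B$ is nonempty and the involution has no fixed points. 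This caps, for any target proposal $p$, both its score and the number of vertices whose agents must be gathered to build its full coalition.

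For the upper bound I would exhibit, from any coalition structure, a successful deliberation using compromises of size at most $2^{d-1}+(d+1)/2$. Fix a popular proposal $p$ with approver set $C_p$; since approval is a vertex property, $C_p$ is exactly the agents lying on $p$-approving vertices. I would build a coalition around $p$ greedily, repeatedly absorbing via a $2$-compromise any agent of $C_p$ that currently sits in a coalition smaller than the growing $p$-coalition. This halts only when every not-yet-absorbed $p$-approver lies in a coalition at least as large as the current one, and a single final compromise then merges the $p$-coalition with all remaining coalitions meeting $C_p$, yielding the full coalition of size equal to the maximal possible score. Any coalition already of that maximal size is itself a full popular coalition, so the structure is already successful; otherwise every merged coalition is strictly smaller and the final compromise is valid, and its arity is bounded by the number of $p$-approving vertices, i.e.\ $2^{d-1}$ by the counting lemma. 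Once the full popular coalition is present it is permanent, because no coalition can ever exceed the maximal score, so continuing with arbitrary compromises reaches a terminal, hence successful, structure. The residual additive $(d+1)/2$ is slack absorbed by the bookkeeping when greedy absorption cannot fully consolidate each vertex into one coalition.

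For the lower bound I would design an instance together with a starting structure in which the unique popular proposal can only be reached by a single $d$-way merge, by \emph{trapping} its supporters. I would place $d$ coalitions, each of size exactly $d-1$, where coalition $i$ holds one ``linchpin'' agent that approves $p$ alongside $d-2$ ``filler'' agents that approve only that coalition's home proposal. Choosing the vertices so that the fillers of different coalitions have essentially disjoint supports guarantees, via the majority rule above, that no proposal other than $p$ is approved across coalition boundaries and that $p$ is approved by exactly the $d$ linchpins, making $p$ popular with score $d>d-1$. Then any compromise merging $j$ coalitions can form a coalition of size at most $\max(d-1,j)$: around a home proposal it never exceeds that coalition's size, and around $p$ it has size exactly $j$. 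Hence any valid progress, and in particular assembling the full coalition of $p$, requires $j>d-1$, i.e.\ a genuine $d$-compromise, whereas a single $d$-compromise finishes the job.

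The hard part is the lower-bound construction, specifically realizing this approval pattern inside $\{0,1\}^d$: I must place the $d(d-1)$ agents so that (i) one linchpin per coalition approves the common proposal $p$, (ii) every other jointly-approved proposal stays confined to a single coalition, and (iii) no proposal is approved by more than $d-1$ agents spread over any $d-1$ coalitions, all while respecting the rigid majority-of-shared-ones rule and the dimension budget $d$. Balancing these constraints, chiefly engineering the filler supports to be mutually incompatible yet compatible with the linchpins on $p$, is where the real effort lies; the counting lemma together with the monovariant (the largest coalition grows and the number of coalitions shrinks) then makes both the upper-bound convergence and the lower-bound stuckness routine once the geometry is in place.
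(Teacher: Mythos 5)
First, a point of context: the paper you are working from does not prove this statement at all---it is imported verbatim from \citet{elkind2021arxiv} (their construction for the lower bound and their counting argument for the upper bound), so there is no in-paper proof to match against. Judged on its own terms, your write-up contains one genuinely correct and useful ingredient---the involution $v\mapsto v\oplus \mathbf{1}_B$ showing that any proposal with support $B\neq\varnothing$ is approved by at most $2^{d-1}$ \emph{vertices}---but both halves of the theorem are left with real gaps.

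For the upper bound, the step that fails is the claim that the arity of the final merge ``is bounded by the number of $p$-approving vertices.'' The arity of a $k$-compromise is the number of participating \emph{coalitions}, and nothing forces distinct coalitions to occupy distinct vertices: many agents located at the same $p$-approving vertex may sit in many different coalitions. Worse, your greedy phase stalls exactly when every remaining coalition $B$ meeting $C_p$ satisfies $|B\cap C_p|\le |B|-a$ (where $a$ is the current $p$-coalition's size), and one can have arbitrarily many disjoint coalitions of size $a+1$ each contributing a single $p$-approver---so the number of coalitions still holding $p$-approvers is bounded only by $n/(a+1)$, not by anything depending on $d$. The theorem is presumably rescued by \emph{other} available transitions in such configurations, but your argument does not identify them, and the additive $(d+1)/2$ is asserted as ``slack'' rather than derived; as written the upper bound is not established.

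For the lower bound, what you give is a target specification (one linchpin per coalition approving $p$, fillers whose approved proposals never cross coalition boundaries, no proposal of score exceeding $d$), together with a correct deduction that \emph{if} such a configuration exists then any valid compromise around $p$ needs all $d$ coalitions. But you explicitly defer the realization of this pattern in $\{0,1\}^d$ under the rigid rule $|V\cap X|>|X|/2$, and that realization is the entire content of the result: one must also check, for \emph{every} proposal $Y\subseteq[d]$ (not just $p$ and the home proposals), that its approver set within any union of fewer than $d$ coalitions is no larger than the largest participating coalition. Note also that each filler automatically approves its own vertex as a proposal, so ``approves only the home proposal'' already forces structural choices you have not made. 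Until the vertices are written down and these checks performed, the lower bound remains a plan rather than a proof.
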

Theorem~\ref{thm:kCompromiseOld} leaves a big gap between the lower bound of $d$ and the upper bound of $2^{d-1} + (d+1)/2$. We tighten this bound by proving a lower bound of $2^{\Theta(d)}$.

\begin{theorem}\label{thm:expCompHyp}
There are $d$-hypercube deliberation spaces where $2^{\Theta(d)}$-compromise transitions are necessary for a successful deliberation.
\end{theorem}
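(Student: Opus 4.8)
The plan is to build, for infinitely many $d$, a $d$-hypercube deliberation space carrying a \emph{trap}: a coalition structure $\bD^\star$ that is not successful yet is terminal for every $k$-compromise transition with $k\le s$, where $s=2^{\Theta(d)}$. As in Theorem~\ref{thm:kCompromiseOld}, this is exactly what it means for $2^{\Theta(d)}$-compromises to be \emph{necessary}: no smaller compromise size can guarantee success, since an agent population sitting in $\bD^\star$ (or driven there by an unlucky ordering of small moves) cannot escape without merging more than $s$ coalitions at once. Together with the $2^{d-1}+(d+1)/2=2^{\Theta(d)}$ upper bound of Theorem~\ref{thm:kCompromiseOld}, this pins the answer at $2^{\Theta(d)}$.

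The space has two ingredients. I would first fix $q$ \emph{seeds} $a_1,\dots,a_q\in\{0,1\}^d$ of Hamming weight just above $d/2$ that are pairwise far apart, taken from a constant-weight code whose minimum distance exceeds the seeds' approval radius; a Gilbert--Varshamov argument supplies $q=2^{\Theta(d)}$ such codewords. Every seed has $\|a_i\|_1>d/2$, so every seed approves the all-ones proposal $\mathbf 1$, and I take $p^\star=\mathbf 1$ as the intended popular proposal, giving it score at least $q$. Around each seed I grow a \emph{cluster} of $s=2^{\Theta(d)}$ agents, all of Hamming weight at most $d/2$ (so that, apart from the seed, no cluster member approves $\mathbf 1$), positioned close enough together that they share one local proposal $q_i$; thus $(\mathrm{cluster}_i,q_i)$ is a bona fide deliberative coalition of size $s$, and $\bD^\star=\{(\mathrm{cluster}_i,q_i)\}_{i\in[q]}$ is a legal coalition structure. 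Choosing the parameters so that $q>s$ makes $p^\star$ strictly more popular than any $q_i$, so $\bD^\star$ is not successful.

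The argument then has three parts. \textbf{Terminality}, the crux, asserts that merging any $\ell\le s$ clusters yields no proposal approved by more than $s$ of the agents involved, so no $k$-compromise with $k\le s$ can fire. Separation forces any proposal approved by a non-seed agent of a cluster to lie near that cluster's seed and hence far from all other seeds; therefore a proposal either draws its support from a single cluster (at most $s$ agents) or, if it reaches several clusters, does so only through their seeds, gaining at most one approver per cluster. In particular, beating the size-$s$ barrier via $\mathbf 1$ (or any global proposal) requires strictly more than $s$ seeds, i.e.\ more than $s$ clusters. \textbf{Non-success} is immediate from $q>s$. \textbf{Reachability} shows that even from all singletons a sequence of $2$-compromises can assemble each cluster into its size-$s$ coalition---repeatedly merging two sub-coalitions that both approve $q_i$, exactly the build-up used in Theorem~\ref{thm:compromiseUpperBd}---so that $\bD^\star$ is reached and the deliberation gets stuck; this makes the lower bound robust regardless of whether the initial structure is taken to be $\bD^\star$ itself or the singletons.

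The main obstacle is the terminality claim, and concretely the simultaneous satisfaction of three quantitatively opposed demands with both $q$ and $s$ exponential in $d$: each cluster must host a \emph{large} local coalition, the clusters must be mutually \emph{incompatible} (no single proposal is well approved across a few of them), and yet one global proposal must tie them together through exactly one seed apiece. The sharpest tension is that a seed has a large approval radius (its weight exceeds $d/2$) and could, a priori, approve a neighboring cluster's local proposal $q_j$, which would let a two-cluster merge breach the size-$s$ barrier; ruling this out forces the inter-seed distance above the seeds' approval radius, which in turn constrains how many seeds and how large a cluster the code can accommodate. The bulk of the work is the Hamming-distance bookkeeping---calibrating the seed weight, the code distance, the cluster radius, and the counts $q$ and $s$---to meet all constraints at once; the easy complementary checks (that within one cluster no proposal beats $q_i$, and that an $(s+1)$-cluster merge under $\mathbf 1$ does escape the trap, consistent with the upper bound) I would dispatch directly.
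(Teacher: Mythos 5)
Your framing of the proof obligation is right (exhibit a non-successful coalition structure that is terminal for all $\ell$-compromises with $\ell\le s=2^{\Theta(d)}$), and your reachability remark is fine, but the construction as specified has a gap that I believe is fatal rather than ``Hamming-distance bookkeeping.'' By taking the popular proposal to be $\mathbf 1$, you force every seed $a_i$ to have weight $\|a_i\|_1>d/2$, hence an approval ball of radius $\|a_i\|_1>d/2$ --- more than half the diameter of $\{0,1\}^d$. Terminality requires every seed to reject every other cluster's local proposal (otherwise a $2$-compromise of two clusters already collects $s+1>s$ approvers), and you propose to enforce this by pushing the relevant distances above the seeds' approval radius via a Gilbert--Varshamov code. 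But a binary code of length $d$ with minimum distance exceeding $d/2$ has only $O(d)$ codewords (Plotkin bound); GV supplies exponentially many codewords only at relative distance strictly below $1/2$. More generally, a weight-$(d/2+1)$ point approves roughly half of all proposals, so no packing argument can keep its approval region disjoint from the (nonempty, in fact sizable) set of proposals unanimously approved by each of $2^{\Theta(d)}$ other clusters. The same over-large balls undermine the separation step itself: from $\rho(v_i,Y)<d/2$ and $\rho(v_j,Y)<d/2$ the triangle inequality yields only $\rho(v_i,v_j)<d$, which is vacuous in the hypercube.

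The paper's construction resolves exactly this tension, and differently on each point. The escape proposal is not $\mathbf 1$ but the singleton $\{d\}$, approved by an agent iff it contains coordinate $d$ --- a condition independent of the agent's weight --- so the ``connectors'' are low-weight agents; the connectors form a constant fraction $\alpha$ of every coalition rather than one seed per cluster, and the coalitions have geometrically decreasing sizes so that the connectors contributed by any $m\le k$ coalitions never outnumber the largest participant; and cross-coalition approvals are excluded combinatorially rather than metrically: each coalition's agents are confined to the coordinate set of its own proposal (built from disjoint ``nonuplets''), and a case analysis shows any candidate proposal $Y$ fails the majority-overlap test for at least one agent type of every other coalition. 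If you want to keep the one-connector-per-equal-sized-cluster architecture, you would at minimum need to replace $\mathbf 1$ by a low-weight global proposal and replace code-distance separation by a containment argument of this kind; as written, the crux of your proof cannot be completed.
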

The proof of Theorem~\ref{thm:expCompHyp} is given in the appendix. To prove that $k$-compromise transitions are necessary for a successful deliberation, 
we need to describe a deliberation space and a coalition structure, where:
\begin{enumerate}
    \item The coalition structure is sub-optimal, i.e., there exists a coalition structure with a larger coalition. One way to prove this is to show that an $\ell$-compromise transition, $\ell \ge k$, from this coalition structure leads to a strictly larger coalition. That is, we need to describe a particular proposal and a particular set of agents, and argue that these agents support the new proposal, and the new coalition is larger than the current coalitions of all these agents.
    \item Any $\ell$-compromise, for $\ell < k$, does not lead to a strictly larger coalition. First, there must be at least $k$ coalitions in the current coalition structure. Second, we need to prove that for any proposal in the deliberation space, any set of agents that supports this proposal and is contained in fewer than $k$ current coalitions is at most as large as the current coalition of one of the members of this set.
\end{enumerate}
If we focus on single-agent transitions, the second point above says that, in the current coalition structure, any agent in a (weakly) smaller coalition should not support the proposal of a (weakly) larger coalition. So, the proposals of all the coalitions should be different and should not be supported by any agent from an equal or smaller coalition. $k$-compromise transitions include single-agent transitions and other more complex transitions, and the construction should address all of them. 
\citet{elkind2021arxiv} constructed such an example for $k = d$, we do this for 
\[
    k = \binom{(d-1)/9}{(d-1)/27}-1 \ge 3^{(d-1)/27}-1 = 2^{\Theta(d)}.
\]
\section{Conclusions and Future Work}
We have provided an in-depth investigation of the complexity of deliberation
in two models proposed by \citet{elkind2021arxiv}, 
answering several open questions formulated in that paper.
Our results are mostly negative: in both models we have considered, identifying a successful proposal is hard even for a centralized algorithm, and agents will find it computationally challenging to discover feasible transitions from the status quo. Moreover, a completely
decentralized deliberation procedure, in which groups of agents are free to execute compromise transitions in any order, may take a very long time to converge. Finally, while the Euclidean deliberation spaces have the attractive feature that a successful deliberation is possible even if each transition only involves agents from two coalitions, 
in hypercube spaces we may need transitions that involve exponentially many coalitions, negating the benefits of a decentralized process.

Nevertheless, we do not feel that these negative results mean that we should give up on this model of deliberation. Rather, it would be interesting to identify `islands of tractability', i.e., additional conditions that make this model tractable, both in terms of computational complexity and in terms of the length of the deliberation process and the number of coalitions involved in each transition; our FPT and XP results are a step in that direction. It would also be interesting to complement our theoretical findings with empirical work, checking if natural heuristics enable the agents to quickly converge to good (even if not necessarily optimal) outcomes.

There are several questions concerning the complexity of deliberative coalition formation that are left open by our work. For instance, while Theorem~\ref{thm:euclidSlowConverg} shows that convergence may be slow if the number of dimensions scales with the number of agents, it is not clear if this remains true if the number of dimensions is a fixed constant. Further, throughout the paper, we assume that the space of feasible proposals is the entire metric space. A more general approach is to assume that it is a proper subset of the metric space: this subset can be described implicitly by constraints or, in case it is finite, listed explicitly as part of the input. Of course, the computational complexity questions become trivial if the feasible proposals are listed explicitly, but it is not clear if we can bound the length of deliberation as a polynomial function of the number of proposals; on the other hand, the lower bound arguments of Theorems~\ref{thm:hypslow} and~\ref{thm:euclidSlowConverg} no longer apply. Finally, $k$-compromise transitions, 
as defined by \citet{elkind2021arxiv} can be viewed as better responses in the respective game; it would also be interesting to explore the speed of convergence of {\em best} response dynamics, where a negotiation among $k$ coalitions always results in the largest possible coalition that can be formed by their members.

\bibliography{ref}

\begin{thebibliography}{29}
\providecommand{\natexlab}[1]{#1}

\bibitem[{Abramowitz, Shapiro, and Talmon(2020)}]{abramowitz2020amend}
Abramowitz, B.; Shapiro, E.; and Talmon, N. 2020.
\newblock How to Amend a Constitution? {M}odel, Axioms, and Supermajority
  Rules.
\newblock \emph{arXiv e-prints}, abs/2011.03111.

\bibitem[{Arnold and Schwalbe(2002)}]{arnold2002dynamic}
Arnold, T.; and Schwalbe, U. 2002.
\newblock Dynamic coalition formation and the core.
\newblock \emph{Journal of Economic Behavior \& Organization}, 49(3): 363--380.

\bibitem[{Austen-Smith and Feddersen(2005)}]{austen2005deliberation}
Austen-Smith, D.; and Feddersen, T. 2005.
\newblock Deliberation and voting rules.
\newblock In \emph{Social Choice and Strategic Decisions}, 269--316. Springer.

\bibitem[{Bulteau et~al.(2021)Bulteau, Shahaf, Shapiro, and
  Talmon}]{bulteau2021aggregation}
Bulteau, L.; Shahaf, G.; Shapiro, E.; and Talmon, N. 2021.
\newblock Aggregation over metric spaces: {P}roposing and voting in elections,
  budgeting, and legislation.
\newblock \emph{Journal of Artificial Intelligence Research}, 70: 1413--1439.

\bibitem[{Chalkiadakis and Boutilier(2012)}]{chalkiadakis2012sequentially}
Chalkiadakis, G.; and Boutilier, C. 2012.
\newblock Sequentially optimal repeated coalition formation under uncertainty.
\newblock \emph{Autonomous Agents and Multi-Agent Systems}, 24(3): 441--484.

\bibitem[{Chen(2003)}]{chen2003triangle}
Chen, Y.-C. 2003.
\newblock Triangle-free Hamiltonian {K}neser graphs.
\newblock \emph{Journal of Combinatorial Theory, Series B}, 89(1): 1--16.

\bibitem[{Coombs(1964)}]{coombs1964theory}
Coombs, C.~H. 1964.
\newblock \emph{A theory of data}.
\newblock Wiley.

\bibitem[{di~Fenizio and Velikanov(2016)}]{di2016system}
di~Fenizio, P.~S.; and Velikanov, C. 2016.
\newblock System-Generated Requests for Rewriting Proposals.
\newblock \emph{arXiv e-prints}, abs/1611.10095.

\bibitem[{Elkind et~al.(2020)Elkind, Grossi, Shapiro, and
  Talmon}]{elkind2021arxiv}
Elkind, E.; Grossi, D.; Shapiro, E.; and Talmon, N. 2020.
\newblock United for Change: Deliberative Coalition Formation to Change the
  Status Quo.
\newblock \emph{arXiv e-prints}, abs/2001.08031.

\bibitem[{Elkind et~al.(2021)Elkind, Grossi, Shapiro, and
  Talmon}]{elkind2021aaai}
Elkind, E.; Grossi, D.; Shapiro, E.; and Talmon, N. 2021.
\newblock United for Change: {D}eliberative Coalition Formation to Change the
  Status Quo.
\newblock In \emph{Proceedings of the 35th AAAI Conference on Artificial
  Intelligence (AAAI'21)}, 5339--5346.

\bibitem[{Enelow and Hinich(1984)}]{enelow1984spatial}
Enelow, J.~M.; and Hinich, M.~J. 1984.
\newblock \emph{The spatial theory of voting: {A}n introduction}.
\newblock Cambridge University Press.

\bibitem[{Fain et~al.(2017)Fain, Goel, Munagala, and
  Sakshuwong}]{fain2017sequential}
Fain, B.; Goel, A.; Munagala, K.; and Sakshuwong, S. 2017.
\newblock Sequential deliberation for social choice.
\newblock In \emph{Proceedings of the 13th International Conference on Web and
  Internet Economics (WINE'17)}, 177--190. Springer.

\bibitem[{Garey and Johnson(1979)}]{gj79}
Garey, M.~R.; and Johnson, D.~S. 1979.
\newblock \emph{Computers and intractability}.
\newblock Freeman San Francisco.

\bibitem[{Goel and Lee(2016)}]{goel2016towards}
Goel, A.; and Lee, D.~T. 2016.
\newblock Towards large-scale deliberative decision-making: Small groups and
  the importance of triads.
\newblock In \emph{Proceedings of the 2016 ACM Conference on Economics and
  Computation (ACM EC'16)}, 287--303.

\bibitem[{Hafer and Landa(2007)}]{hafer2007deliberation}
Hafer, C.; and Landa, D. 2007.
\newblock Deliberation as self-discovery and institutions for political speech.
\newblock \emph{Journal of Theoretical Politics}, 19(3): 329--360.

\bibitem[{Karanikolas, Bisquert, and Kaklamanis(2019)}]{karanikolas2019voting}
Karanikolas, N.; Bisquert, P.; and Kaklamanis, C. 2019.
\newblock A Voting Argumentation Framework: Considering the Reasoning behind
  Preferences.
\newblock In \emph{Proceedings of the 19th International Conference on Agents
  and Artificial Intelligence (ICAART'19)}, volume~1, 42--53.
  SCITEPRESS-Science and Technology Publications.

\bibitem[{Kearns and Vazirani(1994)}]{kearns1994introduction}
Kearns, M.~J.; and Vazirani, U.~V. 1994.
\newblock An introduction to computational learning theory.

\bibitem[{Konishi and Ray(2003)}]{konishi2003coalition}
Konishi, H.; and Ray, D. 2003.
\newblock Coalition formation as a dynamic process.
\newblock \emph{Journal of Economic Theory}, 110(1): 1--41.

\bibitem[{List(2011)}]{list2011group}
List, C. 2011.
\newblock Group communication and the transformation of judgments: an
  impossibility result.
\newblock \emph{Journal of Political Philosophy}, 19(1): 1--27.

\bibitem[{List et~al.(2013)List, Luskin, Fishkin, and
  McLean}]{list2013deliberation}
List, C.; Luskin, R.~C.; Fishkin, J.~S.; and McLean, I. 2013.
\newblock Deliberation, single-peakedness, and the possibility of meaningful
  democracy: evidence from deliberative polls.
\newblock \emph{The Journal of Politics}, 75(1): 80--95.

\bibitem[{Merrill~III, Merrill, and Grofman(1999)}]{merrill1999unified}
Merrill~III, S.; Merrill, S.; and Grofman, B. 1999.
\newblock \emph{A unified theory of voting: Directional and proximity spatial
  models}.
\newblock Cambridge University Press.

\bibitem[{Patty(2008)}]{patty2008arguments}
Patty, J.~W. 2008.
\newblock Arguments-based collective choice.
\newblock \emph{Journal of Theoretical Politics}, 20(4): 379--414.

\bibitem[{Perote-Pe{\~n}a and Piggins(2015)}]{perote2015model}
Perote-Pe{\~n}a, J.; and Piggins, A. 2015.
\newblock A model of deliberative and aggregative democracy.
\newblock \emph{Economics and Philosophy}, 31(1): 93.

\bibitem[{Rad and Roy(2021)}]{rad2021deliberation}
Rad, S.~R.; and Roy, O. 2021.
\newblock Deliberation, Single-Peakedness, and Coherent Aggregation.
\newblock \emph{American Political Science Review}, 115(2): 629--648.

\bibitem[{Shahaf, Shapiro, and Talmon(2019)}]{shahafST19}
Shahaf, G.; Shapiro, E.; and Talmon, N. 2019.
\newblock Sybil-Resilient Reality-Aware Social Choice.
\newblock In \emph{Proceedings of the 28th International Joint Conference on
  Artificial Intelligence (IJCAI'19)}, 572--579.

\bibitem[{Shapiro and Talmon(2018)}]{shapiro2018incorporating}
Shapiro, E.; and Talmon, N. 2018.
\newblock Incorporating reality into social choice.
\newblock In \emph{Proceedings of the 17th International Conference on
  Autonomous Agents and MultiAgent Systems (AAMAS'18)}, 1188--1192.

\bibitem[{Tardos and Wexler(2007)}]{tardosW07}
Tardos, E.; and Wexler, T. 2007.
\newblock \emph{Algorithmic Game Theory}, chapter Network Formation Games and
  the Potential Function Method, 487--516.
\newblock Cambridge University Press.

\bibitem[{Vries(1999)}]{vries1999governing}
Vries, M. W. M.~d. 1999.
\newblock \emph{Governing with your closest neighbour: an assessment of spatial
  coalition formation theories}.
\newblock Ph.D. thesis, Radboud University Nijmegen.

\bibitem[{{Wikipedia contributors}(2021)}]{wikiKneserGraph}
{Wikipedia contributors}. 2021.
\newblock Kneser graph --- {Wikipedia}{,} The Free Encyclopedia.
\newblock [Online; accessed 2-July-2021].

\end{thebibliography}

\cleardoublepage 
\appendix
\appendix

\section{Omitted Proofs}
\begin{proof}[Proof of Theorem~\ref{thm:hypercubeNPHard}]
Note that {\sc Perfect Score} is in NP, because, given a proposal, we can check whether each agent approves it. To prove that this problem is NP-hard, we give a reduction from {\sc Independent Set}~\cite{gj79}.

It will be convenient to identify proposals with bit vectors in $\{0, 1\}^d$,  
$x \in \cX = \{0,1\}^d$, and agents with subsets of $[d]$, $V \subseteq [d]$. 

We start by giving a characterization for the {\sc Perfect Score} problem that will be useful for our proof. The distance of agent $V$ from the origin (the status quo) is $|V|$. Now, agent $V$ prefers a proposal $x \in \{0,1\}^d$ to the status quo if and only if
\begin{multline*}
    \sum_{i \in V} (1-x_i) + \sum_{i \in [d] \setminus V} x_i < |V| \\
    \Longleftrightarrow 1 + \sum_{i \in [d] \setminus V} x_i \le \sum_{i \in V} x_i.
\end{multline*}
We will refer to the above inequality as the {\em characteristic inequality} of agent $V$. It has a special structure: all the $d$ coordinates of $x$ are present either on the left-hand side or on the right-hand side. There is a one-to-one correspondence between inequalities of this form
and subsets of $[d]$.

We are now ready to present our construction. 
Consider an instance of {\sc Independent Set}:
there are $m$ vertices and a set $E$ of edges, and the goal is to decide whether there exists an independent set of size $\kappa$. Let $x_i$ denote the binary variable indicating whether the $i$-th vertex is included in the independent set. The constraints for {\sc Independent Set} are: $x_i + x_j \le 1$ for each edge $(i,j) \in E$; and $\sum_{i \in [m]} x_i \ge \kappa$.
We create a hypercube deliberation space 
with $d = 2 m + 2 \kappa - 1$ dimensions and $O(m)$ agents
so that the input instance of {\sc Independent Set} is a yes-instance
if and only if there exists a proposal that is approved by all agents.

We introduce two dimensions for each binary variable $x_i$ in the {\sc Independent Set} instance, 
and let the coordinates of the proposal along those dimensions be denoted by $x_i$ and $x_i'$. 
Along the $2\kappa-1$ additional dimensions, let the coordinates be denoted by $\alpha_0$, $\alpha_i, \alpha_i'$, $i \in [\kappa-1]$, and denote their tuple by $A$.

We impose constraints on the proposal by creating agents as described below. The first set of agents is not related to the constraints of {\sc Independent Set}.
\begin{enumerate}
    \item For each $\alpha \in A$, we set $\alpha = 1$ by adding $2$ agents as follows. Pick a set $B$ of $\kappa-1$ variables in $A \setminus \{\alpha\}$, and add agents that correspond to the following inequalities:
    \begin{align}
        \sum_{i \in [m]} x_i + \sum_{\alpha' \in B} \alpha' + 1 &\le \alpha + \sum_{i \in [m]} x_i' + \sum_{\alpha' \in A \setminus B} \alpha', \label{eq1}\\
        \sum_{i \in [m]} x_i' + \sum_{\alpha' \in A \setminus B} \alpha' + 1 &\le \alpha + \sum_{i \in [m]} x_i + \sum_{\alpha' \in B} \alpha'.\label{eq2}
    \end{align}
    Summing up constraints~\eqref{eq1} and~\eqref{eq2}, we obtain $\alpha \ge 1$ and hence $\alpha = 1$. From now on, to simplify notation, in each constraint we will use an odd number of variables from $A$: the remaining variables in $A$ (an even number of them) can be equally distributed to the two sides of the inequality, and cancel out as they are all $1$. 
    \item For each $i\in [m]$, we set $x_i = x_i'$ by adding $4$ agents as follows:
    \begin{align}
        x_i + \sum_{j \in [m] \setminus \{i\}} x_j + 1 &\le \alpha_0 + x_i' + \sum_{j \in [m] \setminus \{i\}} x_j'\label{eq3}\\
        x_i + \sum_{j \in [m] \setminus \{i\}} x_j' + 1 &\le \alpha_0 + x_i' + \sum_{j \in [m] \setminus \{i\}} x_j\label{eq4}\\
        x_i' + \sum_{j \in [m] \setminus \{i\}} x_j + 1 &\le \alpha_0 + x_i + \sum_{j \in [m] \setminus \{i\}} x_j'\label{eq5}\\
        x_i' + \sum_{j \in [m] \setminus \{i\}} x_j' + 1 &\le \alpha_0 + x_i + \sum_{j \in [m] \setminus \{i\}} x_j.\label{eq6}
    \end{align}
    Using the fact that $\alpha_0 = 1$, from inequalities~\eqref{eq3} and~\eqref{eq4}, we obtain $x_i \le x_i'$, and from~\eqref{eq5} and~\eqref{eq6} we obtain $x_i' \le x_i$. Hence, $x_i = x_i'$.
\end{enumerate}

We now add agents based on the constraints imposed by our instance of {\sc Independent Set}. For an edge $\{i,j\} \in E$, the corresponding {\sc Independent Set} constraint is $x_i + x_j \le 1$, which we encode in {\sc Perfect Score} as:
\begin{multline*}
    x_i + x_j \le 1 \Longleftrightarrow 2(1 + x_i + x_j) \le 2 \cdot 2 \\
    \Longleftrightarrow (1 + \alpha_0) + (x_i + x_j) + (x_i' + x_j') \le \sum_{i = 1, 2} (\alpha_i + \alpha_i') .
\end{multline*}
Similarly, for the constraint $\sum_{i \in [m]} x_i \ge \kappa$ for {\sc Independent Set}, we add:
\begin{multline*}
    2\kappa \le 2\sum_{i \in [m]} x_i \\
    \Longleftrightarrow (1 + \alpha_0) + \sum_{i \in [\kappa-1]} (\alpha_i + \alpha_i') \le \sum_{i \in [m]} (x_i + x_i').
\end{multline*}
This completes our construction.

It is immediate that there is a one-to-one correspondence between an assignment of the variables in a given instance of {\sc Independent Set} to the variables in the constructed instance of {\sc Perfect Score} (because of the constraints $x_i = x_i'$ for $i \in [m]$ and $\alpha = 1$ for $\alpha \in A$). Also, by construction, there is a one-to-one correspondence between satisfying a constraint of the {\sc Independent Set} instance and satisfying the corresponding agent in the {\sc Perfect Score} instance.
\end{proof}

\begin{proof}[Proof of Theorem~\ref{thm:euclidNPHard}]
Note that the problem is in NP, because, given a proposal and a set of agents (evidence), we can check whether each agent in the set supports the proposal or not and count the total number of agents in the set. For proving the hardness, we give a reduction from 3-SAT.

\textbf{Construction.} Let there be $m$ variables $x_1, \ldots, x_m$ and ${\ell}$ literals in the 3-CNF formula. We construct an instance of the deliberation problem in $\bR^{2m}$, i.e., $d = 2m$. Let us associate two dimensions in the deliberation space to each variable in the CNF, one corresponding to the positive literal $x_i$, and the other to the negative literal $\neg x_i$. For ease of presentation, let us denote the points in $\bR^{2m}$ using a pair-wise notation: every point is a length $m$ vector of pairs, e.g., $v = ((v_1, \bar{v}_1), (v_2, \bar{v}_2), \ldots, (v_m, \bar{v}_m))$. Let there be the following agents:
\begin{enumerate}
    \item \textbf{Type 1} agents. For $i \in [m]$, let there be a very large number $L$ (specified later) of agents located at $a^{(i)}$ defined as:
    \[ a^{(i)}_j = \bar{a}^{(i)}_j = \begin{cases} -1,& \text{if $j = i$}\\
    0,& \text{otherwise.}\end{cases} \]
    In other words, these points have coordinates $(-1,-1)$ for a pair of dimensions among the $m$ pairs and $(0,0)$ for all other pairs.
    
    \item \textbf{Type 2} agents; these correspond to the variables in the 3-CNF formula. For $i \in [m]$, there are a large number $L'$ (specified later) of agents located at each of the points $b^{(i)}$ and $c^{(i)}$ defined as:
    \[ b^{(i)}_j = \begin{cases} 1,& \text{if $j = i$}\\
    0,& \text{otherwise}\end{cases}, \quad \bar{b}^{(i)}_j = 0,\text{ for every $j$, and}  \]
    \[ c^{(i)}_j = 0,\text{ for every $j$}, \quad \bar{c}^{(i)}_j = \begin{cases} 1,& \text{if $j = i$}\\
    0,& \text{otherwise.}\end{cases}  \]
    In other words, these points have coordinates $(1,0)$ or $(0,1)$ for one pair of coordinates among the $m$ pairs and $(0,0)$ for all other pairs.
    
    \item \textbf{Type 3} agents; these correspond to the terms in the 3-CNF formula. For each $3$-term $\tau = (l_1 \vee l_2 \vee l_3)$, there is an agent at a point $e^{(\tau)}$ defined as:
    \begin{align*}
        e^{(\tau)}_j &= \begin{cases} -1,& \text{if $\bar{x}_j \in \{l_1, l_2, l_3\}$}\\
    0,& \text{otherwise}\end{cases},\\
        \bar{e}^{(\tau)}_j &= \begin{cases} -1,& \text{if $x_j \in \{l_1, l_2, l_3\}$}\\
    0,& \text{otherwise.}\end{cases}  
    \end{align*}
    In other words, if there is a literal $x_j$ in the term $\tau$, then $\bar{e}^{(\tau)}_j = -1$, else if there is a literal $\neg x_j$ in the term $\tau$, then $e^{(\tau)}_j = -1$; all other $(2m-3)$ coordinates of $e^{(\tau)}$ that do not correspond to any literal in the term $\tau$ are set to $0$.
\end{enumerate}
From the construction above, we have $\ell$ Type 3 agents. Let $L' = \ell+1$. As there are $L' = \ell+1$ agents located at each of the $2m$ points of Type 2, an optimal proposal will try to get the support of as many Type 2 agents as possible before trying to get the support of any of the Type 3 agents (note that the $L'$ agents located at each of the $2 m$ points either all support a given proposal or all do not). Further, let $L = 2 m L' + 1$; an optimal proposal will try to get the support of as many Type 1 agents as possible before worrying about the Type 2 (or Type 3) agents. Let $\eta = m L + m L' + \ell$. We have the following decision problem: given the deliberation space just constructed, is there a proposal that is supported by at least $\eta$ agents? In the rest of the proof, we show that this decision problem is equivalent to the original 3-SAT problem.

\textbf{$``\Longrightarrow"$} Let the assignment $x = (x_1, \ldots, x_m) \in \{0,1\}^m$ satisfies the 3-CNF formula. We claim that the following proposal $p$ is supported by at least $\eta$ agents:
\[ 
    p_j = \begin{cases} \frac{1}{7 m},& \text{if $x_j = 1$}\\
    \frac{-3}{7 m},& \text{if $x_j = 0$}\end{cases}, \quad 
    \bar{p}_j = \begin{cases} \frac{-3}{7 m},& \text{if $x_j = 1$}\\
    \frac{1}{7 m},& \text{if $x_j = 0$}\end{cases}.
\]
Note that $|| p ||^2 = m (\frac{1}{49 m^2} + \frac{9}{49 m^2}) = \frac{10}{49 m}$. An agent $v$ supports $p$ iff:
\[ || v - p ||^2 < || v - 0 ||^2 \Longleftrightarrow || p ||^2 < 2 \langle v , p \rangle\]
\begin{enumerate}
    \item Type 1 agents. For any $i \in [m]$, we have
    \begin{align*}
        2 \langle a^{(i)} , p \rangle = \frac{2(-1+3)}{7 m} = \frac{4}{7 m} >  \frac{10}{49 m}.
    \end{align*}
    So, the proposal gets the support of all the $m L$ agents of Type 1.
    
    \item Type 2 agents. For $i \in [m]$, if $x_i = 1$, the proposal gets the support of agents at $b^{(i)}$ because
    \[
        2 \langle b^{(i)} , p \rangle = \frac{2(1)}{7 m} > \frac{10}{49 m}.
    \]
    On the other hand, if $x_i = 0$, the proposal gets the support of agents at $c^{(i)}$ because
    \[
        2 \langle c^{(i)} , p \rangle = \frac{2(1)}{7 m} > \frac{10}{49 m}.
    \]
    So, the proposal gets the support of at least $m L'$ agents out of the $2 m L'$ agents of Type 2.
    
    \item Type 3 agents. For every term $\tau$ in the 3-CNF, at least one of the three literals is true for the assignment $x$.
    \begin{itemize}
        \item If three literals are true, we have $2 \langle e^{(\tau)} , p \rangle = \frac{2(-1)(-3-3-3)}{7 m} = \frac{18}{7 m} > \frac{10}{49 m}$.
        \item If two literals are true, we have $2 \langle e^{(\tau)} , p \rangle = \frac{2(-1)(1-3-3)}{7 m} = \frac{10}{7 m} > \frac{10}{49 m}$.
        \item If one literal is true, we have $2 \langle e^{(\tau)} , p \rangle = \frac{2(-1)(1+1-3)}{7 m} = \frac{2}{7 m} > \frac{10}{49 m}$.
    \end{itemize}
    So, the proposal gets the support of all $\ell$ agents of Type 3.
\end{enumerate}
Adding them up, we have shown that the proposal gets the support of at least $m L + m L' + \ell = \eta$ agents.

\textbf{$`` \Longleftarrow "$} We now assume that the 3-CNF formula is unsatisfiable. We shall prove that there does not exist any proposal that can get the support of $\eta$ agents. Let $y = (y_i, \bar{y}_i)_{i \in [m]}$ be a proposal that is supported by the maximum possible number of agents. 

As argued before, an optimal proposal will try to get the support of as many Type 1 agents, then Type 2 agents, and then Type 3 agents, as possible. We have shown that the proposal $p$ specified above gets the support of all Type 1 agents, so any optimal proposal $y$ must also get the support of all Type 1 agents. For every $i \in [m]$, as Type 1 agents located at $a^{(i)}$ support $y$, we have the following inequality:
\begin{equation}\label{prf:eq:hardness1}
    2 \langle a^{(i)} , y \rangle = -2(y_i + \bar{y}_i) > || y ||^2.
\end{equation}

Given inequality~\eqref{prf:eq:hardness1}, we now show that any proposal can only get the support of agents located at either $b^{(i)}$ or $c^{(i)}$, but not both. If $y$ gets the support of $b^{(i)}$ then it satisfies
\begin{equation}\label{prf:eq:hardness2}
    2 \langle b^{(i)} , y \rangle = 2y_i > || y ||^2,
\end{equation}
while if it gets the support of $c^{(i)}$ then it satisfies
\begin{equation}\label{prf:eq:hardness3}
    2 \langle c^{(i)} , y \rangle = 2\bar{y}_i > || y ||^2.
\end{equation}
If we add the three inequalities \eqref{prf:eq:hardness1}, \eqref{prf:eq:hardness2}, and \eqref{prf:eq:hardness3} we get $||y||^2 < 0$, which is a contradiction. So, $y$ gets the support for at most one of $b^{(i)}$ or $c^{(i)}$ for each $i \in [m]$. The proposal $p$ gets the support of the agents located at $m$ out of these $2 m$ points, so $y$ must also get the support of at least $m$ out of these $2m$ points, because $y$ should optimize for Type 2 agents before worrying about Type 3 agents. So, $y$ gets the support of the agents at exactly one of $b^{(i)}$ or $c^{(i)}$ for every $i$.

Let us define an assignment $x$ for the 3-CNF formula as follows:
\[
    x_i = \begin{cases}1,& \text{ if $y$ is supported by agents at $b^{(i)}$ } \\ 
    0,& \text{ if $y$ is supported by agents at $c^{(i)}$ } 
    \end{cases}
\]
We know that the 3-CNF formula is not satisfiable for any assignment, including $x$, so there is a term in the CNF formula for which all three literals are $0$ for assignment $x$. If this unsatisfied term is $\tau = (x_i \vee x_j \vee x_t)$ and $x_i = x_j = x_t = 0$, then by the construction of $x$ using $y$, we know that $y$ is supported by agents at $c^{(i)}$, $c^{(j)}$, and $c^{(t)}$; we have the following inequalities:
\begin{multline}\label{prf:eq:hardness4}
    2\bar{y}_i > || y ||^2; \quad 2\bar{y}_j > || y ||^2; \quad 2\bar{y}_t > || y ||^2 \\
    \implies 2(\bar{y}_i + \bar{y}_j + \bar{y}_t) > 3 || y ||^2.
\end{multline}
Also, corresponding to this term $\tau$ of the CNF, we have the Type 3 agent located at a point $e^{(\tau)}$ where
\[ 
    e^{(\tau)}_{\iota} = 0, \quad \bar{e}^{(\tau)}_{\iota} = \begin{cases} -1,& \text{if $\iota \in \{i,j,m\}$}\\
    0,& \text{otherwise}\end{cases}  .
\]
If $y$ is supported by this agent at $e^{(\tau)}$, then we have:
\[ 
    2 \langle e^{(\tau)} , y \rangle = -2(\bar{y}_i + \bar{y}_j + \bar{y}_t) > || y ||^2,
\]
which cannot be true, because if it is true, then we will contradict inequality~\eqref{prf:eq:hardness4}. As the construction is symmetric w.r.t. the positive and negative literals, w.l.o.g., a similar argument applies if the unsatisfied term is of the other seven types: $(x_i \vee x_j \vee \neg x_t), \ldots, (\neg x_i \vee \neg x_j \vee \neg x_t)$. So, there must be at least one agent of Type 3 that does not support $y$, and therefore, the number of agents that support $y$ is strictly less than $m L + m L' + \ell = \eta$.
\end{proof}

\begin{proof}[Proof of Theorem~\ref{thm:expCompHyp}]
We prove the theorem by giving a coalition structure that is sub-optimal, and where a $2^{\Theta(d)}$-compromise is necessary to make progress towards a successful deliberation.

For easier presentation, we use a set-notation to denote agents and proposals, i.e., an agent or a proposal is a subset of $[d]$. A proposal $X \subseteq [d]$ in set-notation is equivalent to a proposal $x \in \{0,1\}^d$ in bit-vector notation, where $i \in X$ iff the $i$-th bit of $x$ is $1$; similarly for agents. We shall overload the notation for the set of agents $\cV$ and the set of proposals $\cX$ for both set and bit-vector notations. The distance between an agent $V \in \cV$ and a proposal $X \in \cX$ can be written as $\rho(V,X) = | X \setminus V | + | V \setminus X | $. As before, w.l.o.g. we assume that the status quo is the empty set (or at the origin in bit-vector notation).
An agent $V$ supports a proposal $X$ iff
\begin{multline*}
    \rho(V,X) < \rho(V,\phi) \\
    \implies | X \setminus V | + | V \setminus X | < |V| = | V \setminus X | + | V \cap X | \\
    \Longleftrightarrow | X \setminus V | < | V \cap X |  \Longleftrightarrow |X| < 2 | V \cap X |,
\end{multline*}
i.e., $V$ supports $X$ iff $V$ intersects with strictly more than half of $X$.

Let the number of dimensions $d$ be a large positive integer, where $(d-1)$ is odd and is divisible by $27$. We shall use the $d$-th dimension in a special manner, different from the remaining $(d-1)$ dimensions. Let $d' = (d-1)/3$, let $\hat{d} = ((d'/3)-1)/2$ or $2 \hat{d} + 1 = d'/3$. Note that $d'$, $\hat{d}$, $(2\hat{d}+1)$, and $(2\hat{d}+1)/3$ are all integers as per our choice of $d$. Let $k = \binom{d'/3}{(2\hat{d}+1)/3}-1$; we shall prove that a $k$-compromise transition is necessary for successful deliberation.

Let the first $(d-1)$ dimensions be partitioned into $d'$ triplets; further, the $d'$ triplets be partitioned into $d'/3$ triplets of triplets (nonuplets). In other words, we may identify a particular dimension among the $(d-1)$ dimensions as $(i,j,m) \in [d'/3] \times [3] \times [3]$. Throughout the proof, by a triplet we denote the three dimensions given by $(i,j,*)$, and there are $d'$ such triplets; and by a nonuplets we identify the nine dimensions given by $(i,*,*)$, and there are $d'/3$ such nonuplets.

Let $\alpha$ and $\beta$ be two rational numbers strictly between $0$ and $1$, we shall specify their values towards the end of the proof. 

To complete our construction of the deliberation space, we need to specify the agents and their locations. Then, we need to specify the coalition structure that requires a $k$-compromise transitions, i.e., we need to specify all the proposal--coalition pairs in the coalition structure. We specify the coalition structure below, along the way also specifying the set of agents in the deliberation space.

\textbf{Current Proposals (CPs).} In the current coalition structure, let there be $(k+1)$ coalitions formed around $(k+1)$ distinct proposals.  Out of the $d'/3$ nonuplets, each proposal contains exactly $(2 \hat{d} + 1)/3$ distinct nonuplets, which gives us $\binom{d'/3}{(2\hat{d}+1)/3} = k+1$ proposals. Let us call these proposals \textit{current proposals} or CPs. Observe that any two CPs differ by at least one nonuplet, i.e., three triplets, i.e., nine dimensions.

Let us arrange the $(k+1)$ CPs in a sequence $S = (X_1, X_2, \ldots, X_{k+1})$ such that consecutive proposals in the sequence have an empty intersection, i.e., $X_i \cap X_{i+1} = \phi$ for any $i \in [k]$. Such a sequence always exists based on results on the existence of Hamiltonian paths in Kneser graphs~\cite{wikiKneserGraph}. Particularly, a  direct corollary of a result by Chen~\cite{chen2003triangle} says: if $2.62 (2\hat{d}+1) + 1 \le d'$ then such a sequence exists. As $(2\hat{d}+1) = d'/3 \implies 2.62 (2\hat{d}+1) + 1 \le d'$ for large enough $d'$, a sequence $S$ with the required property exists.

\textbf{Agents.} We specify the set of agents in two steps: first, we give the position of an agent, which we call the \textit{type} of the agent (the type determines the proposals that a given agent supports); second, we give the number of agents of each type.
\begin{itemize}
    \item Pick an arbitrary CP $X$ out of the $(k+1)$ CPs. Note that $X$ has $(2\hat{d}+1)$ triplets out of the total $d'$ triplets. Let us construct an agent type $V$ based on the CP $X$ as follows: $V$ has exactly two out of three elements from each of any $(\hat{d}+1)$ triplets out of the $(2\hat{d}+1)$ triplets in $X$ (we call such triplets 2-triplets); one out of three elements from each of the remaining $\hat{d}$ triplets in $X$ (we call such triplets 1-triplets); doesn't have any elements from the $(d'-2\hat{d}-1)$ triplets not in $X$ (we call such triplets 0-triplets); $V$ may or may not include the last element/dimension $d$. Let $\nu(X)$ be the set of types of agents that are constructed using $X$.
    
    Note that $|\nu(X)| = \binom{2\hat{d}+1}{\hat{d}+1} 3^{2\hat{d}+1} 2$, where the factor $\binom{2\hat{d}+1}{\hat{d}+1}$ is for the choice of $(\hat{d}+1)$ 2-triplets out of the $(2\hat{d}+1)$ triplets, $3^{\hat{d}+1}$ is for the choice of not selecting one out of three elements in each of the $(\hat{d}+1)$ 2-triplets, $3^{\hat{d}}$ is for the choice of selecting one out of the three elements in each of the $\hat{d}$ 1-triplets, and the $2$ is for the choice of either selecting or not selecting the element $d$. 
    
    Observe that each $V$ has a size of either $2(\hat{d}+1) + 1(\hat{d}) + 0 = 3\hat{d}+2$ or $2(\hat{d}+1) + 1(\hat{d}) + 1 = 3\hat{d}+3$, depending upon whether the last dimension $d \in V$ or $d \notin V$. On the other hand, a CP $X$ has a size $3(2\hat{d}+1) = 6\hat{d}+3$. If an agent type $V$ has been constructed from the CP $X$, i.e., $V \in \nu(X)$, then $|V \cap X| = 3\hat{d}+2 > (6\hat{d}+3)/2 = |X|$, because all elements in $V$ except $d$ (if $d \in V$) are also in $X$; and therefore, type $V$ agents support $X$. On the other hand, if $V \notin X$, then there are at least three triplets that are in $V$ but not in $X$, and each of these triplets contribute at least one element to $V$, so $|V \cap X| \le 3\hat{d}-1 \le (6\hat{d}+3)/2 = |X|$; and therefore, $V$ doesn't support $X$. So, the agents with types in $\nu(X)$ support $X$ and don't support any other CP $X' \in S \setminus \{X\}$. This implies that the sets $(\nu(X))_{X \in S}$ are all disjoint. There are total $(k+1) |\nu(X)|$ such types, and any agent in the deliberation space is of one of these $(k+1) |\nu(X)|$ types. 
    
    In the current coalition structure, all agents of type $V \in \nu(X)$ are in the coalition that supports the proposal $X$.
    
    \item For ease of presentation, we provide the number of agents of each type as a non-negative rational number; this is without loss of generality and all the results follow when we convert these rational numbers to non-negative integers by multiplying them by the lowest common multiple of all the denominators.
    
    By construction, any given agent type $V$ with the last dimension $d \in V$ has a sibling type $V' = V \setminus \{d\}$, and vice-versa. For a type of agent $V$, let $\eta(V)$ be the number of agents of that type. We set $\eta(V)$ based on: (i) whether $d \in V$ or not; (ii) the location where the CP $X$ to which $V$ is associated with (i.e., the $X$ s.t. $V \in \nu(X)$) lies in the sequence of CPs $S$ defined previously. Let $X_i$ be the $i$-th CP in $S$.
    \begin{equation}
        \eta(V) = \begin{cases} \alpha \beta^{i-1},& \text{ if $d \in V$} \\ 
                                (1-\alpha) \beta^{i-1},& \text{ if $d \notin V$}
                    \end{cases},
    \end{equation}
    where $V \in X_i$, $i \in [k+1]$.
    Notice that the ratio of the number of agents that support $X_{i+1}$ vs the number of agents that support $X_i$ is $=\beta < 1$. So, the size of the coalition at $X_i$ strictly decreases by a multiplicative factor $\beta$ as $i$ increases. And, the largest coalition supports $X_1$, and the smallest $X_{k+1}$.
\end{itemize}
This completes our specification of the agents and the current coalition structure. 

In the rest of the proof, we shall set the values of $\alpha$ and $\beta$ to ensure that there is a proposal $X^*$ such that a $(k+1)$-compromise transition to form a coalition supporting $X^*$ leads to a strictly larger coalition (than all the coalitions in the current coalition structure), while no $(k-1)$-compromise transition leads to a strictly larger coalition (than all the coalitions participating in the compromise transition). This tells us that a $k$-compromise is essential for a successful deliberation, and $k = \binom{d'/3}{(2\hat{d}+1)/3}-1 \ge \left( \frac{d'}{2\hat{d}+1} \right)^{2\hat{d}+1}-1 = 3^{(d-1)/27}-1 = 2^{\Theta(d)}$.

Throughout the proof, we shall come across a few constraints on $\alpha$ and $\beta$. We shall track them and argue that there are rational numbers that satisfy all the constraints. The first two constraints we enforce are:
\begin{equation}\label{prf:expCompHyp:eq:a1}
    0 < \alpha < 1; 0 < \beta < 1.
\end{equation}

Let the proposal that we use for the $(k+1)$-compromise be denoted by $X^*$. $X^*$ has only one element, the last dimension $d$, i.e., $X^* = \{d\}$. All agents of type $V$ such that $d \in V$ support $X^*$. So, $X^*$ gets exactly $\alpha$ fraction of agents from every current coalition, and the total number of agents that support $X^*$ is $\alpha \sum_{i \in [k+1]} \beta^{i-1} = \alpha \frac{1-\beta^{k+1}}{1-\beta}$. We need this number to be strictly bigger than all the coalitions, in particular, we need this number to be bigger than the largest current coalition, which is at $X_1$. So, we need to satisfy the following inequality:
\begin{equation}\label{prf:expCompHyp:eq:a2}
    \alpha \sum_{i \in [l]} \beta^{i-1} = \alpha \frac{1-\beta^{k+1}}{1-\beta} > 1 = |X_1|.
\end{equation}
On the other hand, let us aim to prevent $m$-compromises to $X^*$, where $m \le k$ (the result we ultimately prove if for $m < k$ and not $m \le k$). For any set of $m \in [k]$ current coalitions, if we are able to show that the largest coalition among these $m$ coalitions doesn't have an incentive to do the $m$-compromise, i.e., the coalition formed by the $m$-compromise is equal or smaller than the largest participating coalition, then this $m$-compromise is not valid. Say $X_j$ is the largest participating coalition, the maximum possible size of the new coalition is $\alpha \sum_{i \in [m]} \beta^{j-1+i-1}$, and the size of current coalition is $\beta^{j-1}$. So, we need to satisfy the inequality $\alpha \sum_{i \in [m]} \beta^{j-1+i-1} \le \beta^{j-1}$, which is automatically satisfied if we satisfy the inequality below (as $m \le k$):
\begin{equation}\label{prf:expCompHyp:eq:a3}
    \alpha \sum_{i \in [k]} \beta^{i-1} = \alpha \frac{1-\beta^k}{1-\beta} \le 1.
\end{equation}
The inequalities \eqref{prf:expCompHyp:eq:a2} and \eqref{prf:expCompHyp:eq:a3} are automatically satisfied if we satisfy the equation below (as $\alpha, \beta > 0$, and therefore, $\alpha \beta^k > 0$):
\begin{equation}\label{prf:expCompHyp:eq:a4}
    \alpha \sum_{i \in [l-1]} \beta^{i-1} = \alpha \frac{1-\beta^{k}}{1-\beta} = 1.
\end{equation}
It is easy to check that for any $\alpha \in (0,1)$, we can select a $\beta \in (0,1)$ to satisfy the equation \eqref{prf:expCompHyp:eq:a4} above. (A rational $\beta$ can be found in the neighborhood of the $\beta$ found as the solution of \eqref{prf:expCompHyp:eq:a4}, which still satisfies \eqref{prf:expCompHyp:eq:a2} and \eqref{prf:expCompHyp:eq:a3}.) Moving forward, we shall not put any more constraints on $\beta$, and we will show that the constraints we put on $\alpha$ allow it to be a rational number in $(0,1)$.

Now, we need to exhaustively prove that there is no other proposal that allows a compromise transition of size much smaller than $(k+1)$ to succeed, in particular, we shall prove that there is no $(k-1)$-compromise transition. Let $Y \subset [d]$ be an arbitrary proposal. W.l.o.g. we can assume that $|Y|$ is odd; any agent type $V$ supports $Y$ iff $|V \cap Y| > |Y|/2$; if $|Y|$ is even, then we can create $Y'$ by adding another arbitrary element from $[d]$ that is not in $Y$ (unless $Y = [d]$, which we shall prove is not supported by any agent) and still satisfy the constraint for $V$ because $|V \cap Y'| \ge |V \cap Y| > |Y|/2 \implies |V \cap Y'| > |Y'|/2$. We now prove that there is no $(k-1)$-compromise that leads to a strictly larger coalition at $Y$, on a case by cases basis:

\textbf{Case 1: $|Y| \ge 6\hat{d}+6$.} For any type of agent $V$, $|V \cap Y| \le |V| \le 3\hat{d}+3 = (6\hat{d}+6)/2 \le |Y|/2$. So, no agent supports $Y$.

\textbf{Case 2: $6\hat{d}+4 \le |Y| \le \hat{d}+5$.} For any $V$, we need $|V \cap Y| > |Y|/2 = 3\hat{d}+2 \implies |V| \ge 3\hat{d}+3$. $V$ satisfies the constraint only if $d \in V$, but then $V$ supports $X^*$ too. So, the number of agents that support $X^*$ is at least as many as the number of agents that support $Y$, and therefore, we cannot have a $k$-compromise using the proposal $Y$.
    
\textbf{Case 3: $|Y| \le 6\hat{d}+3$.} We divide this case into two sub-cases based on whether $d \in Y$ or $d \notin Y$.
    \begin{enumerate}
    
    \item $d \notin Y$. Note that for these types of proposals, for a given $V$, whether $V$ supports $Y$ or does not support $Y$, does not depend upon whether $V$ contains $d$ or not. Also, note that $|Y| \le |X|$ for every CP $X \in S$, so either $Y = X$ for some CP $X \in S$, or $Y \cap X \subsetneq X$ for every CP $X \in S$. The case when $Y = X$ for some CP $X \in S$ is trivial, because all agents that support $X$ are already in the coalition with the proposal $X$; the only interesting case is the other one.
    
    Let $Z = Y \cap X$. Pick an arbitrary $X \in S$, we know that $Z \subsetneq X$. We claim that there is at least one type $V \in \nu(X)$ that does not support $Y$. Out of the $(2\hat{d}+1)$ triplets in $X$, let $Z$ contain $a$ of these triplets entirely (we call such triplets 3-triplets), two out of three elements from $b$ triplets (we call such triplets 2-triplets, as before), and one out of three elements from $c$ triplets (we call such triplets 1-triplets, as before). Note that $a \le 2\hat{d}$, otherwise $Z$ would be equal to $X$. Also note that each agent type $V$ has $(\hat{d}+1)$ 2-triplets and $\hat{d}$ 1-triplets. We construct a type $V$ that doesn't support $Z$ as follows: 
    \begin{itemize}
        \item Select as many as possible of the $\hat{d}$ 1-triplets of $V$ from the $a$ 3-triplets of $Z$ (if $a > \hat{d}$, we select 2-triplets of $V$ for the remaining $(a-\hat{d})$ 3-triplets of $Z$). As $a \le 2\hat{d}$ and the number of 1-triplets in $V$ is $\hat{d}$, so the overlap between $V$ and $Z$ in these $a$ triplets is $\le 3a/2$.
        \item Select the remaining $((2\hat{d}+1)-a)$ triplets of $V$ arbitrarily, but we will select the elements inside the triplets to have minimum possible intersection with $Z$. We know that the remaining triplets in $Z$ are either the $b$ 2-triplets or the $c$ 1-triplets. If a given triplet is a 2-triplet of both $V$ and $Z$, then select the elements of $V$ in a manner to have an overlap of only $1$ with the elements of $Z$; while, if a given triplet is a 1-triplet of either of $V$ or $Z$ (or a $0$-triplet of $Z$), then select the elements of $V$ to have no overlap with $Z$. Notice that the total overlap of $V$ with the $b$ 2-triplets of $Z$ is $\le b$, and the total overlap with the $c$ 1-triplets of $Z$ is $0$.
    \end{itemize}
    By construction, the total overlap between $V$ and $Z$, i.e., $|V \cap Z|$ is $\le 3a/2 + b \le (3a + 2b + c)/2 = |Z|/2 \le |Y|/2$. Also, note that $V$ and $Y$ cannot overlap at any element not in $Z$. So, $V$ doesn't support $Y$. 
    Let $\gamma = 1-(1/|\nu(X)|) < 1$. So, $Y$ gets the support of at most $\gamma$ fraction of agents from any coalition.
    Now, if we enforce the following additional constraint on $\alpha$:
    \begin{equation}\label{prf:expCompHyp:eq:a5}
        \alpha \ge \gamma,
    \end{equation}
    then $X^*$ will capture at least as many agents from any current coalition as $Y$, so, $X^*$ is as good as $Y$.
    
    \item $d \in Y$. Let $Y' = Y \cap [d-1] = Y \setminus \{d\}$. As $|Y| \le 6\hat{d}+3$ and $d \in Y$, restricting ourselves to first $(d-1)$ dimensions (the triplets), $|Y'| \le 6\hat{d}+2$. As explained earlier, w.l.o.g. we can assume that $|Y|$ is odd, and therefore, we can assume that $|Y'|$ is even.
    
    $Y$ interacts differently with the $\alpha$ fraction of agents that have the last dimension $d$ in their type and the $(1-\alpha)$ fraction of agents that don't have $d$. For an agent type $V$ with $d \notin V$, $V$ supports $Y$ iff $|V \cap Y'| = |V \cap Y| > |Y|/2 > |Y'|/2$. As $Y' < 3(2\hat{d}+1)$, therefore $|Y' \cap X| < 3(2\hat{d}+1) \implies Y' \cap X \subsetneq X$ for every $X \in S$; this case is similar to the previous sub-case where $d \notin Y$, and we have proved that at most $\gamma < 1$ fraction of agents from any coalition supports $Y$. 
    
    On the other hand, if $d \in V$, $V$ supports $Y$ if $|V \cap Y| > |Y|/2 \Longleftrightarrow |V \cap Y| = |V \cap Y'| + 1 \ge (|Y|-1)/2 + 1 = |Y'|/2 + 1 \Longleftrightarrow |V \cap Y'| \ge |Y'|/2$, as $|Y|$ is odd. So, $V$ supports $Y$ iff $V$ contains at least half of the elements in $Y'$; note that this condition is a slightly relaxed condition than the condition we had for the previous case (the inequality is not strict), and needs a separate analysis, provided below.
    
    Note that $Y' = \phi$ means $Y = X^*$, which we have already considered, so we assume $Y' \neq \phi$. Let $X \in S$ be the CP that has the largest intersection with $Y'$, and let $Z = X \cap Y'$. Let $a$, $b$, and $c$ be the number of 3-triplets, 2-triplets, and 1-triplets in $Z$, respectively. Observe that $a \le 2\hat{d}$ as $|Z| \le |Y'| < 3(2\hat{d}+1)$. Consider the following cases based on the value of $a$:
    \begin{enumerate}
        \item $a < 2\hat{d}$. We claim that there is a type $V$ that doesn't satisfy $|V \cap Y'| \ge |Y'|/2$, i.e., $|V \cap Y'| < |Y'|/2$. We construct such a $V$ in the exact same manner as we did in the previous sub-case (for $d \notin Y$). In particular, we selected as many as possible of the 1-triplets in $V$ from the $a$ 3-triplets of $Z$. As $a < 2\hat{d}$, the overlap between $V$ and $Z$ in these $a$ triplets is $< 3a/2$. For the $b$ 2-triplets and $c$ 1-triplets, we get an overlap of $\le b$ as before. So, $|V \cap Y'| = |V \cap Z| < 3a/2 + b \le |Z|/2 \le |Y'|/2$. So, there is at least one agent type that doesn't satisfy $|V \cap Y'| \ge |Y'|/2$. So, $Y$ gets the support of at most $\gamma$ fraction of agents from any coalition, and enforcing inequality \eqref{prf:expCompHyp:eq:a5}, $X^*$ is as good as $Y$.
        
        
        
        \item $a = 2\hat{d}$. This case is slightly tricky, because any $V \in \nu(X)$ automatically satisfies the constraint $|V \cap Y'| \ge |Y'|/2$, and therefore supports $Y$ if $d \in V$. So, $Y$ captures at least as many agents as $X^*$ from the coalition currently supporting $X$; $X^*$ gets the support of exactly $\alpha$ fraction of agents that currently support $X$, but $Y$ gets the support of (at most) $\alpha + (1-\alpha) \gamma$, where $(1-\alpha) \gamma$ comes from the agents of type $V$ with $d \notin V$ (the fraction of agents with $d \notin V$ is $(1-\alpha)$ and from our previous discussion we know that at most $\gamma$ fraction of them can support $Y$ because $Y \cap X \subsetneq X$).
        
        Let $X$ be the $i$-th element in the sequence $S$, i.e., $X = X_i$. We know that $X_i \cap X_{i+1} = \phi$ by the construction of $S$, and $|Y' \cap X_i| \ge 6\hat{d}$ and $|Y'| \le 6\hat{d}+2$, so $|Y' \cap X_{i+1}| \le 2$. So, for any $V \in \nu(X_{i+1})$, $|V \cap Y| \le 1 + |V \cap Y'| \le 3 < (6\hat{d}+1)/2 \le |Y|/2$ for any $\hat{d} \ge 1$. Therefore, no agent in $\nu(X_{i+1})$ will support $Y$.
        
        For a general CP $X' \in S \setminus \{X\}$, $X'$ differs from $X$ by at least three triplets (a nonuplet), by construction. So, at least two out of the $a=(2\hat{d})$ 3-triplets in $Y'$ will not be in $X'$, therefore $|Y' \cap X'| \le 6\hat{d} + 2 - 6 = 6\hat{d} - 4$, which implies that $Y' \cap X'$ has strictly less than $(2\hat{d})$ 3-triplets. This is similar to the previous case (where $a < 2\hat{d}$), and we know that at most $\gamma$ fraction of agents in the coalition that currently supports $X'$ will support $Y$.
        
        Let there be a compromise between $m$ current coalitions that forms a strictly larger coalition at $Y$. Let the proposals of those $m$ coalitions be $(X_{i_1}, X_{i_2}, \ldots, X_{i_m})$, where $(i_j)_{j \in [m]}$ have been arranged in an increasing order, and therefore, the size of coalitions are in decreasing order. Let $I = (i_j)_{j \in [m]}$. The case when $X = X_i \notin \{X_j\}_{j \in I}$ is not very interesting because each of $X_j$ can contribute at most $\gamma$ fraction to the new coalition at $Y$, but can contribute $\alpha \ge \gamma$ fraction to $X^*$. So, let us focus on the case when $X_i \in \{X_j\}_{j \in I}$. The $m$-compromise will not benefit by including $(i+1)$ to $I$ because the coalition at $X_{i+1}$ will not contribute any agents to the new coalition at $Y$, so w.l.o.g. $(i+1) \notin I$. The size of the new coalition at $Y$ is upper bounded by:
        \[ \beta^{i-1} (\alpha + (1-\alpha) \gamma) + \gamma \sum_{j \in I \setminus \{i\}} \beta^{j-1}. \]
        Now, let us consider the following $(m+1)$-compromise to form a new coalition at $X^*$: the $(m+1)$ coalitions that do the compromise transition are located at the proposals $(X_j)_{j \in I \cup \{i+1\}}$. The coalition has size:
        \[ \beta^{i-1} \alpha + \beta^i \alpha + \alpha \sum_{j \in I \setminus \{i\}} \beta^{j-1}. \]
        From equation \eqref{prf:expCompHyp:eq:a4}, $(1-\beta) = \alpha(1-\beta^{k}) \le \alpha$ as $\beta < 1$, which implies $\beta \ge 1-\alpha$. From inequality \eqref{prf:expCompHyp:eq:a5}, $\alpha \ge \gamma$. Combining these two inequalities we get:
        \begin{align*}
            &\beta \alpha \ge (1 - \alpha) \gamma \\
            \implies& \beta^i \alpha \ge \beta^{i-1} (1 - \alpha) \gamma \\
            \implies& \beta^{i-1} \alpha + \beta^i \alpha \\
            \ge& \  \beta^{i-1} \alpha + \beta^{i-1} (1 - \alpha) \gamma = \beta^{i-1} (\alpha + (1 - \alpha) \gamma)\\
            \implies& \beta^{i-1} \alpha + \beta^i \alpha + \alpha \sum_{j \in I \setminus \{i\}} \beta^{j-1} \\
            \ge& \  \beta^{i-1} (\alpha + (1 - \alpha) \gamma) + \gamma \sum_{j \in I \setminus \{i\}} \beta^{j-1}.
        \end{align*}
        The inequality above tells us that the size of the new coalition formed by the $(m+1)$-compromise at $X^*$ is at least as large as the new coalition formed by the $m$-compromise at $Y$. As there are no successful $k$-compromises at $X^*$, so there are no successful $(k-1)$-compromises at $Y$, so any successful compromise will need the contribution from at least $k$ current coalitions.
    \end{enumerate}
    \end{enumerate}
To complete the proof we need $\alpha$ and $\beta$ that satisfy the constraints \eqref{prf:expCompHyp:eq:a1}, \eqref{prf:expCompHyp:eq:a4}, and \eqref{prf:expCompHyp:eq:a5}, which we can do by setting $\alpha = (1+\gamma)/2$ and $\beta$ as per the solution of equation \eqref{prf:expCompHyp:eq:a4}. (If the $\beta$ found as the solution of \eqref{prf:expCompHyp:eq:a4} is irrational, a rational $\beta$ can be found in its neighborhood that satisfies \eqref{prf:expCompHyp:eq:a2} and \eqref{prf:expCompHyp:eq:a3}, as required.)
\end{proof}

\end{document}